\title[Reconfig.\ and message losses in param.\ broadcast networks]{Reconfiguration and message losses \texorpdfstring{\\}{} in parameterized broadcast networks}
\author[N. Bertrand]{Nathalie Bertrand\rsuper{a}}
\author[P. Bouyer]{Patricia Bouyer\rsuper{b}}
\author[A. Majumdar]{Anirban Majumdar\rsuper{{a,b}}\texorpdfstring{\vspace{-3mm}}{}}
\address{\lsuper{a}Univ. Rennes, Inria, CNRS, IRISA --- Rennes (France)}
\address{\lsuper{b}LSV, CNRS \& ENS Paris-Saclay, Univ. Paris-Saclay --- Cachan (France)}
\begin{document}

\maketitle

\begin{abstract}
  Broadcast networks allow one to model networks of identical nodes
  communicating through message broadcasts. Their parameterized
  verification aims at proving a property holds for any number of
  nodes, under any communication topology, and on all possible
  executions. We focus on the coverability problem which dually asks
  whether there exists an execution that visits a configuration
  exhibiting some given state of the broadcast protocol. Coverability
  is known to be undecidable for static networks, \emph{i.e.} when the
  number of nodes and communication topology is fixed along
  executions.  In contrast, it is decidable in \PTIME when the
  communication topology may change arbitrarily along executions, that
  is for reconfigurable networks. Surprisingly, no lower nor upper
  bounds on the minimal number of nodes, or the minimal length of
  covering execution in reconfigurable networks, appear in the
  literature.

In this paper we show tight bounds for cutoff and length, which happen
to be linear and quadratic, respectively, in the number of states of
the protocol. We also introduce an intermediary model with static
communication topology and non-deterministic message losses upon
sending. We show that the same tight bounds apply to lossy networks,
although, reconfigurable executions may be linearly more succinct than
lossy executions. Finally, we show \NP-completeness for the natural
optimisation problem associated with the cutoff.
\end{abstract}

\section{Introduction}%
\label{sec:intro}

\paragraph{Parameterized verification.}
Systems formed of many identical agents arise in many concrete areas:
distributed algorithms, populations, communication or cache-coherence
protocols, chemical reactions etc. Models for such systems depend on
the communication or interaction means between the agents. For example
pairwise interactions are commonly used for populations of
individuals, whereas selective broadcast communications are more
relevant for communication protocols on ad-hoc networks. The capacity
of the agents, and thus models that are used to represent their
behaviour also vary.

Verifying such systems amounts to checking that a property holds
independently of the number of agents. Typically, a consensus
algorithm should be correct for any number of participants. We refer
to these systems as parameterized systems, and the parameter is the
number of agents. The verification of parameterized systems started in
the late 1980's and recently regained attention from the model-checking
community~\cite{Suz-ipl88,GS92,Esp14,Bloem-book}. It can be seen as
particular cases of infinite-state-system verification, and the fact
that all agents are identical can sometimes lead to efficient
algorithms~\cite{ES96}.

\paragraph{Broadcast networks.}
This paper targets the application to protocols over ad-hoc networks,
and we thus focus on the model of broadcast networks~\cite{DSZ10}. A
broadcast network is composed of several nodes that execute the same
broadcast protocol. The latter is a finite automaton, where
transitions are labeled with message sendings or message
receptions. Configuration in broadcast networks is then comprised of a
set of agents, their current local states, together with a
communication topology (which represents which agents are within radio
range). A transition represents the effect of one agent sending a
message to its neighbours.

Parameterized verification of broadcast networks amounts to checking a
given property independently of the initial configuration, and in
particular independently of the number of agents and communication
topology. A natural property one can be interested in is coverability:
whether there exists an execution leading to a configuration in which
one node is in a given local state.  When considering error states, a
positive instance for the coverability problem thus corresponds to a
network that can exhibit a bad behaviour.

Coverability is undecidable for static broadcast networks~\cite{DSZ10},
\emph{i.e.}  when the communication topology is fixed along
executions. Decidability can be recovered by relaxing the semantics
and allowing non-deterministic reconfigurations of the communication
topology. In reconfigurable broadcast networks, coverability of a
control state is decidable in \PTIME~\cite{DSTZ12}. A simple saturation
algorithm allows to compute the set of all states of the broadcast
protocol that can be covered.

\paragraph{Cutoff and covering length.}  Two important
characteristics of positive instances of the coverability problem are
the cutoff and the covering length. First, the \emph{cutoff} is the
minimal number of agents for which a covering execution exists. The
notion of cutoff is particularly relevant for reconfigurable broadcast
networks since they enjoy a monotonicity property: if a state can be
covered from a configuration, it can also be from any configuration
with more nodes.  Second, the
\emph{covering length} is the minimal number of steps for covering
executions. It weighs how fast a network execution can go wrong. Both
the cutoff and the covering length are somehow complexity measures for
the coverability problem. Surprisingly, no upper nor lower bounds on
these values appear in the literature for reconfigurable broadcast
networks.

\paragraph{Contributions.}  In this paper, we prove a
tight linear bound for the cutoff, and a tight quadratic bound for the
covering length in reconfigurable broadcast networks. Both are
expressed in the number of states of the broadcast protocol. These are
obtained by refining the saturation algorithm that computes the set of
coverable states, and finely analysing it.

Another contribution is to introduce lossy broadcast networks, in
which the communication topology is fixed, however errors in message
transmission may occur. In contrast with broadcast networks with
losses that appear in the literature~\cite{DSZ-forte12}, in our model,
message losses happen upon sending, rather than upon reception. This
makes a crucial difference: reconfiguration of the communication
topology can easily be encoded by losses upon reception, whereas it is
not obvious for losses upon sending. Perhaps surprisingly, we prove
that the set of states that can be covered in reconfigurable semantics
agrees with the one in static lossy semantics. Using the same refined
saturation algorithm, we prove that same tight bounds hold for
lossy broadcast networks: the cutoff is linear, and the covering
length is quadratic (in the number of states of the broadcast
protocol). The two semantics thus appear quite similar, yet, we show
that the reconfigurable semantics can be linearly more succinct (in
terms of number of nodes) than the lossy semantics.

Finally, we study a natural decision problem related to the cutoff:
decide whether a state is coverable (in either semantics) with a fixed
number of nodes. We prove it to be \NP-complete.

\paragraph{Outline.}  In Section~\ref{sec:nets}, we
define the broadcast networks, with static, reconfiurable and lossy
semantics. In Section~\ref{sec:contributions}, we present our tight
bounds for cutoff and covering length. In
Section~\ref{sec:succinctness}, we show our succinctness result.  In
Section~\ref{sec:complexity}, we give our \NP-completeness result.

\section{Broadcast networks}%
\label{sec:nets}

\subsection{Static broadcast networks}
\begin{defi}
  A \emph{broadcast protocol} is a tuple
  $\BP =(\States,\initStates,\Mess,\Trans)$ where $\States$ is a finite set of control
  states; $\initStates \subseteq \States$ is the set of initial control
  states;
  $\Mess$~is a finite alphabet; and
  $\Trans \subseteq (\States \times
  \{\broadcast{\mess},\receive{\mess} \mid \mess \in \Mess \}\times
  \States)$ is the transition relation.
\end{defi}

For ease of readability, we often write $\state
\xrightarrow{\broadcast{\amess}} \state'$ (resp.~$\state
\xrightarrow{\receive{\amess}} \state'$) for $(\state,
\broadcast{\amess}, \state')\in\Trans$ (resp.~$(\state,
\receive{\amess}, \state')\in\Trans$). We assume all broadcast
networks to be complete for receptions: for every $\state \in \States$
and $\amess \in \Mess$, there exists $\state'$ such that $\state
\xrightarrow{\receive{\amess}} \state'$.

A broadcast protocol is represented in
Figure~\ref{fig:ExampleProtocol}. In this example and in the whole
paper, for concision purposes, we assume that if the reception of a
message is unspecified from some state, it implicitly represents a
self-loop. For example here, from $\state_1$, receiving $\amess$ leads
to $\state_1$ again.
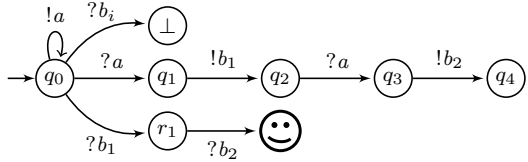
\begin{figure}[h]
\begin{center}
\begin{tikzpicture}[shorten >=1pt,node distance=7mm and 1.5cm,on grid,auto,semithick]
    \everymath{\scriptstyle}
  \node[state,inner sep=1pt,minimum size=5mm] (q_0) {$q_0$};
  \path (q_0.-180) edge[latex'-] ++(180:4mm);
\node[state,inner sep=1pt,minimum size=5mm] (q_1) [right = of q_0] {$q_1$};
\node[state,inner sep=1pt,minimum size=5mm] (q_2) [right = of q_1] {$q_2$};
\node[state,inner sep=1pt,minimum size=5mm] (q_3) [right = of q_2] {$q_3$};
\node[state,inner sep=1pt,minimum size=5mm] (q_4) [right = of q_3] {$q_4$};

\node[state,inner sep=1pt,minimum size=5mm] (r_1) [below = of q_1] {$r_1$};

\node[state,inner sep=1pt,minimum size=5mm] (bot) [above  = of q_1] {$\bot$};
\node (r_2) [right = of r_1, inner sep=0pt] {$\Smiley[2]$};

 \path[-latex']
(q_0) edge [loop above]	node [above ] {$\broadcast{\amess}$} (q_0)
 (q_0) edge node {$\receive{\amess}$} (q_1)
(q_1) edge node {$\broadcast{\bmess}_1$} (q_2)
(q_2) edge node {$\receive{\amess}$} (q_3)
(q_3) edge node {$\broadcast{\bmess}_2$} (q_4)

 (q_0) edge[bend right] node [below] {$\receive{\bmess}_1$} (r_1)
(r_1) edge node[below] {$\receive{\bmess}_2$} (r_2)
 (q_0) edge[bend left] node [above] {$\receive{\bmess}_i$} (bot)
;

\end{tikzpicture}
\caption{Example of a broadcast protocol.}%
\label{fig:ExampleProtocol}
\end{center}
\end{figure}

Broadcast networks involve several copies, or \emph{nodes}, of the
same broadcast protocol~$\BP$. A configuration is an undirected graph
whose vertices are labelled with a state of $\States$. Transitions
between configurations happen by broadcasts from a node to its
neighbours.

Formally, given a broadcast protocol $\BP =
(\States,\initStates,\Mess,\Trans)$, a \emph{configuration} is an
undirected graph $\config = (\Nodes,\Edges,\labelf)$ where $\Nodes$ is
a finite set of nodes; $\Edges \subseteq \Nodes \times \Nodes$ is a
symmetric and irreflexive relation describing the set of edges;
finally, $\labelf\colon \Nodes \to \States$ is the labelling
function. We~let $\Configs{\BP}$ denote the (infinite) set of
$\States$-labelled graphs. Given a configuration $\config \in
\Configs{\BP}$, we~write $\node \sim \node'$ whenever $(\node,\node')
\in \Edges$ and we let $\Neigh{\config}{\node} =\{\node' \in \Nodes
\mid \node \sim \node'\}$ be the neighbourhood of~$\node$,
\emph{i.e.}\ the set of nodes adjacent
to~$\node$.
Finally $\labelf(\config)$ denotes the set of labels appearing in
nodes of $\config$. A~configuration $(\Nodes,\Edges,\labelf)$ is called
\emph{initial} if $\labelf(\Nodes)\subseteq \initStates$.

The operational semantics of a static broadcast network for a given
broadcast protocol $\BP$ is an infinite-state transition system
$\TS(\BP)$. Intuitively, each node of a configuration runs
protocol~$\BP$, and may send\slash receive messages to\slash from its
neighbours.
From a configuration $\config = (\Nodes,\Edges,\labelf)$, there is a
step to $\config' = (\Nodes',\Edges',\labelf')$ if $\Nodes'=\Nodes$, $\Edges'=
\Edges$, and there exists $\node \in \Nodes$ and $\amess \in \Mess$
such that $(\labelf(\node),\broadcast{\mess},\labelf'(\node)) \in
\Trans$, and for every $\node' \in \Nodes$, if $\node' \in
\Neigh{\config}{\node}$, then
$(\labelf(\node'),\receive{\mess},\labelf'(\node')) \in \Trans$,
otherwise $\labelf'(\node') = \labelf(\node')$: a step reflects how
nodes evolve when one of them broadcasts a message to its neighbours.
We write $\config \xrightarrow{\node,\broadcast{\mess}}_{\static}
\config'$, or simply $\config \to_{\static} \config'$ (the $\static$
subscript emphasizes that the communication topology is
\emph{static}).

An \emph{execution} of the static broadcast network is a sequence
$\exec={(\config_i)}_{0\leq i\leq r}$ of configurations
$(\Nodes,\Edges,\labelf_i)$ such that $\config_0$ is an initial
configuration, and for every~$0 \le i<r$,
$\config_i \to_{\static} \config_{i+1}$. We write $\execsize{\exec}$
for the number of nodes in $\config_0$, $\execlength{\exec}$ for the
number $r$ of steps along $\exec$, and for any node
$\node\in \Nodes, \nodeexeclength{\exec}{\node}$ for the number of
broadcasts, called the \emph{active length}, of node $\node$ along $\exec$.
  Note that, along an execution, the number of nodes and
  the communication topology are fixed. The set of all static
  executions is denoted $\execset{\static}{\BP}$.

  We provide an example of a static execution in
  Figure~\ref{fig:static_exec} for the broadcast protocol of
  Figure~\ref{fig:ExampleProtocol}. Note that the communication
  topology is the same throughout the execution.  In the example, the
  colored nodes broadcast a message to its neighbours in the step
  leading to the next configuration.

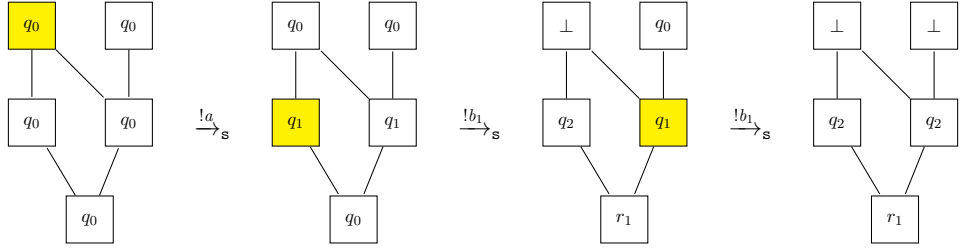
\begin{figure}[htbp]\centering
\scalebox{.65}
  {
\begin{tikzpicture}[>=stealth',shorten >=1pt,auto,
                    semithick]
  \tikzstyle{every state}=[fill=none,draw=black,text=black,inner sep =
  0pt,rectangle]

  \node[state,fill=yellow] (A1)                   {$q_0$};
  \node[state]         (B1) [below = 1 of A1] 	  {$q_0$};
  \node[state]         (C1) [below right = 1 and 0.2 of B1] 	  {$q_0$};
  \node[state]         (D1) [right = 1 of A1] 	  {$q_0$};
  \node[state]         (E1) [right = 1 of B1] 	  {$q_0$};

\node (x1) [right = .7 of E1]  {\Large{$\xrightarrow{\broadcast{\amess}}_\static$}};

\node[state, fill=yellow] (A2)   [right = .7 of x1]                {$q_1$};
  \node[state]         (B2) [below right = 1 and 0.2 of A2] 	  {$q_0$};
  \node[state]         (C2) [above = 1 of A2] 	  {$q_0$};
  \node[state]         (D2) [right = 1 of A2] 	  {$q_1$};
  \node[state]         (E2) [right = 1 of C2] 	  {$q_0$};

  \node (x2) [right = .7 of D2]  {\Large{$\xrightarrow{\broadcast{\bmess}_1}_\static$}};

\node[state] (A3)   [right = .7 of x2]                {$q_2$};
  \node[state]         (B3) [below right = 1 and 0.2 of A3] 	  {$r_1$};
  \node[state]         (C3) [above = 1cm of A3] 	  {$\bot$};
 \node[state,fill=yellow]         (D3) [right = 1cm of A3] 	  {$q_1$};
  \node[state]         (E3) [right = 1cm of C3] 	  {$q_0$};

\node(x3) [right = .7cm of D3]  {\Large{$\xrightarrow{\broadcast{\bmess}_1}_\static$}};

\node[state] (A4)   [right = .7cm of x3]                {$q_2$};
  \node[state]         (B4) [below right = 1cm and 0.2cm of A4] 	  {$r_1$};
  \node[state]         (C4) [above = 1cm of A4] 	  {$\bot$};
 \node[state]         (D4) [right = 1cm of A4] 	  {$q_2$};
  \node[state]         (E4) [right = 1cm of C4] 	  {$\bot$};

 \path (A1) edge  	 (B1)
 (A1) edge  	 (E1)
 (D1) edge  	 (E1)
 (C1) edge  	 (B1)
 (C1) edge  	 (E1)

 (A2) edge  	 (B2)
 (D2) edge  	 (B2)
(A2) edge  	 (C2)
 (D2) edge  	 (E2)
 (D2) edge  	 (C2)

 (A3) edge  	 (B3)
 (D3) edge  	 (B3)
 (A3) edge  	 (C3)
 (D3) edge  	 (E3)
 (D3) edge  	 (C3)

 (A4) edge  	 (B4)
 (D4) edge  	 (B4)
 (A4) edge  	 (C4)
 (D4) edge  	 (E4)
 (D4) edge  	 (C4)
;

\end{tikzpicture}
}
\caption{A sample static execution for the protocol from Figure~\ref{fig:ExampleProtocol}.}%
\label{fig:static_exec}
\end{figure}

\paragraph*{\textbf{Coverability problem}.}
Given a broadcast protocol~$\BP$ and a subset of target states
$\targetset \subseteq \States$, we write
$\cover{\static}{\BP}{\targetset}$ for the set of all
\emph{covering} executions, that is, executions that visit a configuration
with a node labelled by a state in $\targetset$:
\[
  \cover{\static}{\BP}{\targetset} = \{{(\config_i)}_{0\leq i\leq r}
  \in \execset{\static}{\BP}\mid
  \labelf(\config_r) \cap \targetset \not=\emptyset\}. \enspace
\]
The \emph{coverability problem} is a decision problem that takes a
broadcast protocol $\BP$ and a subset of target states $\targetset$ as
inputs, and outputs whether $\cover{\static}{\BP}{\targetset}$ is
nonempty.  For broadcast networks, the coverability problem is a
parameterized verification problem, since the number of initial
configurations is infinite. It is known that coverability is
undecidable for static broadcast networks~\cite{DSZ10}, since one can
use the communication topology to build chains that may encode values
of counters, and hence simulate Minsky machines~\cite{Min67}.

Back to the example in Figure~\ref{fig:ExampleProtocol}, one can show
that from any initial configuration, the target state $\Smiley$ cannot
be covered under the static semantics:
$\cover{\static}{\BP}{\Smiley} = \emptyset$. Indeed, in order to cover
$\Smiley$, one node --say $\node$-- must reach $q_4$ by performing the
broadcasts of $b_1$ and $b_2$. Node $\node$ must be linked to at least
one other node $\node'$ that performs the second broadcast of $\mess$
and make $\node$ move from $q_2$ to $q_3$. However, because of the
static topology after $\node$ had broadcast $b_1$, $\node'$ could no
longer be in $q_0$, and thus would not have been able to broadcast~$\mess$.

If the broadcast protocol~$\BP$ allows to cover the subset
$\targetset$, we define the \emph{cutoff} as the minimal number of
nodes required in an  execution to cover $\targetset$.
Similarly, we define the \emph{covering length} as the length of a
shortest finite  execution covering
$\targetset$.
Those values are important to characterize the complexity of a
broadcast protocol: assuming a safe set of states, coverability of the
complement set represents bad behaviours, and cutoff and covering
length measure the size of minimal witnesses for violation of the
safety property.

\subsection{Reconfigurable broadcast networks}
To circumvent the undecidability of coverability for static broadcast
networks, one attempt is to introduce non-deterministic
reconfiguration of the communication topology. This solution also
allows one to model arbitrary mobility of the nodes, which is
meaningful, \emph{e.g.} for mobile ad-hoc networks~\cite{DSZ10}.

Under this semantics, configurations are the same as under the static
semantics. Transitions between configurations however are enhanced by
the ability to modify the communication topology after performing a
broadcast. Formally, from a configuration
$\config = (\Nodes,\Edges,\labelf)$, there is a step to
$\config' = (\Nodes',\Edges',\labelf')$ if $\Nodes'=\Nodes$, and there
exists $\node \in \Nodes$ and $\amess \in \Mess$ such that
$(\labelf(\node),\broadcast{\mess},\labelf'(\node)) \in \Trans$, and
for every $\node' \in \Nodes$, if
$\node' \in \Neigh{\config}{\node}$, then
$(\labelf(\node'),\receive{\mess},\labelf'(\node')) \in \Trans$,
otherwise $\labelf'(\node') = \labelf(\node')$: a step thus reflects
that
the communication topology may change from $\Edges$ to $\Edges'$
after the broadcast of a message from a node to its neighbours
in the old topology. For such a step, we write
$\config \xrightarrow{\node,\broadcast{\mess}}_{\reconfig} \config'$,
or simply $\config \to_{\reconfig} \config'$.

Figure~\ref{fig:mobile_exec_first} gives an example of a
reconfigurable execution for the broadcast protocol of
Figure~\ref{fig:ExampleProtocol}. Note that the communication topology
indeed evolves along the execution.  As before, the colored nodes
broadcast a message in the step leading to the next
configuration. This sample execution does cover
$\Smiley$.


 \begin{figure}[ht]\centering
 \scalebox{.7}{
 \begin{tikzpicture}[>=stealth',shorten >=1pt,auto,
                     semithick]
   \tikzstyle{every state}=[fill=none,draw=black,text=black,inner sep = 0pt,rectangle]

   \node[state,fill=yellow] (A1)                   {$q_0$};
   \node[state]         (B1) [below = 1cm of A1] 	  {$q_0$};
   \node[state]         (C1) [below = 1cm of B1] 	  {$q_0$};

 \node (x1) [right = .7cm of B1]  {\LARGE $\xrightarrow{\broadcast{\amess}}_\reconfig$};

 \node[state, fill=yellow] (A2)   [right = .7cm of x1]                {$q_1$};
   \node[state]         (B2) [below = 1cm of A2] 	  {$q_0$};
   \node[state]         (C2) [above = 1cm of A2] 	  {$q_0$};

 \node (x2) [right = .7cm of A2]  {\LARGE $\xrightarrow{\broadcast{\bmess}_1}_\reconfig$};

 \node[state] (A3)   [right = .7cm of x2]                {$q_2$};
   \node[state]         (B3) [below = 1cm of A3] 	  {$r_1$};
   \node[state,fill=yellow]         (C3) [above = 1cm of A3] 	  {$q_0$};

 \node(x3) [right = .7cm of A3]  {\LARGE $\xrightarrow{\broadcast{\amess}}_\reconfig$};

 \node[state,fill=yellow] (A4)   [right = .7cm of x3]                {$q_3$};
   \node[state]         (B4) [below = 1cm of A4] 	  {$r_1$};
   \node[state]         (C4) [above = 1cm of A4] 	  {$q_0$};

 \node(x4) [right = .5cm of A4]  {\LARGE $\xrightarrow{\broadcast{\bmess}_2}_\reconfig$};

 \node[state] (A5)   [right = .7cm of x4]                {$q_{4}$};
   \node[state]         (B5) [below = 1cm of A5] 	  {$\Smiley[2]$};
   \node[state]         (C5) [above = 1cm of A5] 	  {$q_0$};

 %

  \path (A1) edge  	 (B1)
  (C1) edge  	 (B1)
  (A2) edge  	 (B2)
  (A3) edge  	 (B3)
  (A3) edge  	 (C3)
  (A4) edge  	 (B4)
  (A5) edge  	 (B5)
 ;

 \end{tikzpicture}
 }
 \caption{A sample reconfigurable execution for the broadcast protocol from Figure~\ref{fig:ExampleProtocol}.}%
 \label{fig:mobile_exec_first}
 \end{figure}
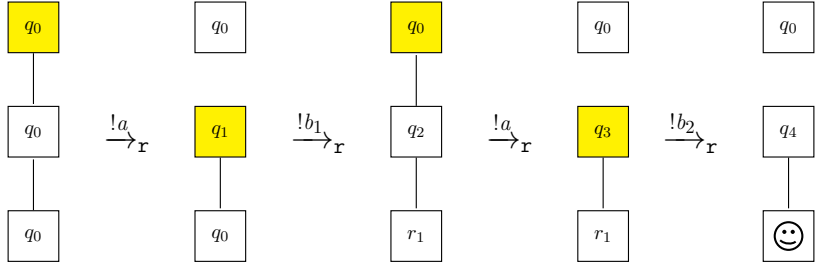

Similarly to the static case, we write $\execset{\reconfig}{\BP}$ and
$\cover{\reconfig}{\BP}{\targetset}$ for, respectively the set of all
reconfigurable executions in $\BP$, and the set of all reconfigurable
executions in $\BP$ that cover $\targetset$.  We will also use the
same notations $\execsize{\exec}$, $\execlength{\exec}$ and
$\nodeexeclength{\exec}{\node}$ as in the static case.

A noticeable property of reconfigurable broadcast networks is the
so-called copycat property. This monotonicity property has already
been observed in~\cite{DSTZ12}, and it also appears in several other
contexts, for instance in asynchronous shared-memory
systems~\cite{EGM13}.

\begin{prop}[Copycat for reconfigurable semantics]%
  \label{prop:copycat-reconfig_reconfig}
  Given
  $\exec: \config_0 \to_\reconfig \config_1 \cdots \to_\reconfig \config_s$
  an  execution, with $\config_s = (\Nodes,\Edges,\labelf)$,
  for every $\state \in \labelf(\config_s)$, for every
  $\node^\state \in \Nodes$ such that
  $\labelf(\node^\state) = \state$, there exists $t \in \nats$ and an
   execution
  $\exec': \config'_0 \to_\reconfig \config_1' \cdots \to_\reconfig
  \config_t'$ with $\config_t' = (\Nodes',\Edges',\labelf')$ such that
  $\size{\Nodes'} = \size{\Nodes}{+}1$, there is an injection
  $\iota: \Nodes \to \Nodes'$ with for every $\node \in \Nodes$,
  $\labelf'(\iota(\node)) = \labelf(\node)$, and for the extra node
  $\node_\fresh \in \Nodes' \setminus \iota(\Nodes)$,
  $\labelf'(\node_\fresh) = \state$,
and   $\nodeexeclength{\exec'}{\node_\fresh} =
  \nodeexeclength{\exec}{\node^\state}$.
\end{prop}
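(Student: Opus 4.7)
I would prove the statement by induction on the length $s$ of $\exec$, constructing $\exec'$ so that the extra node $\node_\fresh$ shadows every action of $\node^\state$ throughout the execution. The key leverage is that the reconfigurable semantics allows arbitrary topology changes after each broadcast, so we can freely place $\node_\fresh$ in or out of the neighborhood of the current broadcaster. This lets us force $\node_\fresh$ to receive exactly the same messages as $\node^\state$, and to perform the same broadcasts via additional interleaved steps.

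For the base case $s=0$, it suffices to augment $\config_0$ with a single fresh node $\node_\fresh$ labelled with $\state = \labelf(\node^\state) \in \initStates$, taking $t=0$ and letting $\iota$ be the identity inclusion; the active lengths trivially match. In the inductive step, I would first apply the induction hypothesis to the prefix of $\exec$ of length $s-1$, with $\node^\state$ still as the distinguished node but now carrying its state at $\config_{s-1}$. This yields an execution $\exec''$ of some length $t''$ ending in a configuration where a copy of $\node_\fresh$ sits in state $\labelf_{s-1}(\node^\state)$. I would then prolong $\exec''$ to simulate the final step $\config_{s-1} \xrightarrow{\node, \broadcast{m}}_\reconfig \config_s$ as follows. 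If $\node \neq \node^\state$, one step suffices: $\node$ broadcasts $m$ under a topology that preserves its original neighborhood (via $\iota$) and additionally includes $\node_\fresh$ iff $\node^\state$ was a neighbor of $\node$ in $\config_{s-1}$, so that $\node_\fresh$ takes the same reception transition (or none) as $\node^\state$. If $\node = \node^\state$, I would use two steps: first $\node^\state$ broadcasts $m$ with its original neighbors and with $\node_\fresh$ disconnected, reproducing the effect of the original step on every other node; then $\node_\fresh$ broadcasts $m$ with an empty neighborhood, affecting only its own label.

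In either case, $\node_\fresh$ ends in state $\state$, and a direct check shows that $\nodeexeclength{\exec'}{\node_\fresh}$ increases by one precisely when $\node^\state$ broadcasts, yielding the required equality $\nodeexeclength{\exec'}{\node_\fresh} = \nodeexeclength{\exec}{\node^\state}$. The only point requiring a small argument is that the supplementary broadcast by $\node_\fresh$ in the second case is a legal reconfigurable step; but this is immediate because $\node_\fresh$ is in the same local state as $\node^\state$ was before its broadcast, so it can fire the same transition, and an empty neighborhood is always achievable through reconfiguration. I do not expect any real obstacle here: the reconfigurable semantics makes the argument essentially a matter of careful bookkeeping on topologies and active lengths.
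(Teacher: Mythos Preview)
Your proposal is correct and follows essentially the same approach as the paper: the paper gives only an informal proof sketch after the proposition, describing exactly your construction---$\node_\fresh$ mimics $\node^\state$ by receiving from the same broadcaster whenever $\node^\state$ receives, and by replaying each broadcast of $\node^\state$ to an empty neighbourhood via an extra step. Your inductive presentation and case split on whether the broadcaster is $\node^\state$ simply formalise this idea; the only bookkeeping point worth making explicit is that, since reconfiguration happens \emph{after} each broadcast, the topology needed for the next step must be installed at the end of the previous one, which is always possible.
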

Intuitively, the new node $\node_\fresh$ will copy the moves of node
$\node^\state$: it performs the same broadcasts (to an empty set of
neighbours) and receives the same messages. More precisely, when
$\node^\state$ broadcasts in $\exec$, it does so also in $\exec'$ and
then we disconnect all the nodes and $\node_\fresh$ repeats the
broadcast (no other node is affected because of the disconnection);
when $\node^\state$ receives a message in $\exec$, we connect
$\node_\fresh$ to the same neighbours as $\node^\state$ (\emph{i.e.},
$\iota(\node) \sim' \node_\fresh$ if and only if $\node \sim
\node^\state$) so that $\node_\fresh$ also receives the same message
in $\exec'$.  Figure~\ref{fig:copycat} illustrates the copycat
property for the reconfigurable semantics. In this example, the
bottom-most node copies the middle node from
Figure~\ref{fig:mobile_exec_first}. Notice that in the final
configuration, these two nodes are at $q_4$ which is highlighted in
blue.

 \begin{figure}[ht]\centering
 \scalebox{.65}{

\begin{tikzpicture}[>=stealth',shorten >=1pt,auto,node distance=1.5cm,
                    semithick]
  \tikzstyle{every state}=[inner sep = 0pt,rectangle]

  \node[state,fill=yellow] (A1)                   {$q_0$};
  \node[state]         (B1) [below = 1cm of A1] 	  {$q_0$};
  \node[state]         (C1) [below = 1cm of B1] 	  {$q_0$};
  \node[state]         (D1) [below = 1cm of C1] 	  {$q_0$};

\node (x1) [below right = 0 and .5cm of B1]  {\LARGE $\xrightarrow{\broadcast{\amess}}_\reconfig$};

\node[state, fill=yellow] (A2)   [right = 2.4cm of B1]                {$q_1$};
  \node[state]         (B2) [below = 1cm of A2] 	  {$q_0$};
  \node[state]         (C2) [above = 1cm of A2] 	  {$q_0$};
\node[state]         (D2) [below = 1cm of B2] 	  {$q_1$};

\node (x2) [below right = 0 and .5cm of A2]  {\LARGE $\xrightarrow{\broadcast{\bmess}_1}_\reconfig$};

\node[state] (A3)   [right = 2.4cm of C2]                {$q_0$};
  \node[state]         (B3) [below = 1cm of A3] 	  {$q_2$};
  \node[state]         (C3) [below = 1cm of B3] 	  {$r_1$};
 \node[state, fill = yellow]         (D3) [below = 1cm of C3] 	  {$q_1$};
 \node (x3) [below right = 0 and .5cm of B3]  {\LARGE $\xrightarrow{\broadcast{\bmess}_1}_\reconfig$};

\node[state,fill = yellow] (A4)   [right = 2.7cm of A3]                {$q_0$};
  \node[state]         (B4) [below = 1cm of A4] 	  {$q_2$};
  \node[state]         (C4) [below = 1cm of B4] 	  {$r_1$};
  \node[state]         (D4) [below = 1cm of C4] 	  {$q_2$};
   \node (x4) [below right = 0 and .5cm of B4]  {\LARGE $\xrightarrow{\broadcast{\amess}}_\reconfig$};

\node[state] (A5)   [right = 2.4cm of A4]                {$q_0$};
  \node[state, fill=yellow]         (B5) [below = 1cm of A5] 	  {$q_3$};
  \node[state]         (C5) [below = 1cm of B5] 	  {$r_1$};
  \node[state]         (D5) [below = 1cm of C5] 	  {$q_3$};

   \node (x5) [below right = 0 and .5cm of B5]  {\LARGE $\xrightarrow{\broadcast{\bmess}_2}_\reconfig$};

\node[state] (A6)   [right = 2.4cm of A5]                {$q_0$};
  \node[state]         (B6) [below = 1cm of A6] 	  {$q_4$};
  \node[state]         (C6) [below = 1cm of B6] 	  {$\Smiley[2]$};
  \node[state, fill=yellow]         (D6) [below = 1cm of C6] 	  {$q_3$};
   \node (x6) [below right = 0 and .5cm of B6]  {\LARGE $\xrightarrow{\broadcast{\bmess}_2}_\reconfig$};

\node[state] (A7)   [right = 2.4cm of A6]                {$q_0$};
  \node[state,fill=blue!15]         (B7) [below = 1cm of A7] 	  {$q_4$};
  \node[state]         (C7) [below = 1cm of B7] 	  {$\Smiley[2]$};
  \node[state,fill=blue!15]         (D6) [below = 1cm of C7] 	  {$q_4$};

 \path (A1) edge  	 (B1)
 (C1) edge  	 (B1)
 (A2) edge  	 (B2)
 (A3) edge  	 (B3)
 (B3) edge  	 (C3)
 (A4) edge  	 (B4)
  (C4) edge  	 (B4)
    (C5) edge  	 (B5)
   (C6) edge  	 (B6)
      (C7) edge  	 (B7)
;
\path [bend right] (A1) edge (D1);
\path [bend right] (A4) edge (D4);

\end{tikzpicture}
}
 \caption{Illustration of the copycat property for reconfigurable semantics.}%
 \label{fig:copycat}
\end{figure}
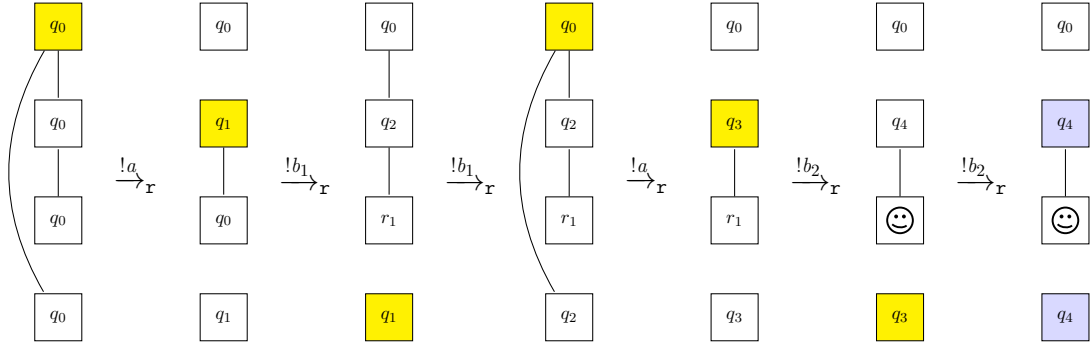

Relying on the copycat property, when reconfigurations are allowed,
the coverability problem is decidable and solvable in polynomial
time.
\begin{thmC}[\cite{DSTZ12}]
  Coverability is in \PTIME for reconfigurable broadcast
  networks.
\end{thmC}
More precisely, a simple saturation algorithm allows one to compute in
polynomial time, the set of all states that can be covered. Despite
this complexity result, to the best of our knowledge, no bounds on the
cutoff or length of witness executions are stated in the literature.

\subsection{Broadcast networks with messages losses}

Communication failures were studied for broadcast networks, assuming
non-deterministic message losses could happen: when a message is
broadcast, some of the neighbours of the sending node may not receive
it~\cite{DSZ-forte12}. As observed by the authors, the coverability
problem for such networks easily reduces to the coverability problem
in reconfigurable networks by considering a complete topology, and
message losses are simulated by reconfigurations. Thus, message losses
upon reception are equivalent to reconfiguration of the communication
topology.

We propose an alternative semantics here: when a message is broadcast,
it either reaches all neighbours of the sending node, or none of
them. This is relevant in contexts where broadcasts are performed
  in an atomic manner and may fail.  In contrast to message losses
upon reception, it is not obvious to simulate arbitrary
reconfigurations of the communication topology with such message
losses.

Formally, from a configuration $\config = (\Nodes,\Edges,\labelf)$,
there is a step to $\config' = (\Nodes',\Edges',\labelf')$ if
$\Nodes'=\Nodes$, $\Edges'=\Edges$ and there exists $\node \in \Nodes$
and $\amess \in \Mess$ such that
$(\labelf(\node),\broadcast{\mess},\labelf'(\node)) \in \Trans$, and
either (a) for every $\node' \neq \node$,
$\labelf'(\node') = \labelf(\node')$ (no one has received the message,
it has been lost), or (b) if $\node' \in \Neigh{\config'}{\node}$,
then $(\labelf(\node'),\receive{\mess},\labelf'(\node')) \in \Trans$,
otherwise $\labelf'(\node') = \labelf(\node')$: a step thus reflects
that the broadcast message may be lost when it is sent.  We write
$\config \xrightarrow{\node,\broadcast{\mess}}_{\lossy} \config'$ or
simply $\config \to_{\lossy} \config'$.  Similarly to the static and
reconfigurable semantics, $\execsize{\exec}$ is the number of nodes in
$\exec$, $\execlength{\exec}$ is the number of steps, and
$\nodeexeclength{\exec}{\node}$ is the number of broadcasts (including
lost ones) by node $\node$ along $\exec$; moreover,  we write
$\nodeexecreallength{\exec}{\node}$ for the number of successful
broadcasts by node $\node$ along $\exec$.

For lossy executions, we also use the following notations:
$\execset{\lossy}{\BP}$ for the set of all lossy executions, and
$\cover{\lossy}{\BP}{\targetset}$ for the set of all lossy executions
that cover $\targetset$. One can naturally associate a
  reconfigurable execution to any lossy execution by simulating a lost
  broadcast by an empty communication topology. More precisely, a
lossy execution with communication topology $E$ can be transformed
into a reconfigurable one in which the communication topology in each
configuration is either $\emptyset$ or $E$, depending on whether the
next broadcast is lost or not. Therefore, with slight abuse of
notation, we write
$\execset{\lossy}{\BP} \subseteq \execset{\reconfig}{\BP}$.

Figure~\ref{fig:lossy_exec} gives an example of a lossy execution for
the broadcast protocol of Figure~\ref{fig:ExampleProtocol}. Note that
the topology is fixed throughout the execution. Yet, in the third
transition, some node performs a lossy broadcast, emphasized in the
figure by the subscript ``lost''.  As before, the colored nodes
broadcast a message in the step leading to the next configuration.

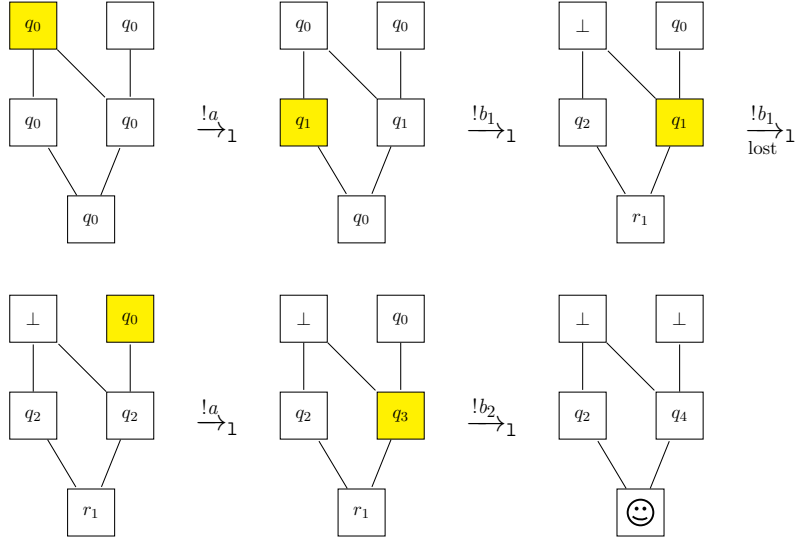
\begin{figure}[h]\centering
\scalebox{.65}{
\begin{tikzpicture}[>=stealth',shorten >=1pt,auto,node distance=1.5cm,
                    semithick]
  \tikzstyle{every state}=[fill=none,draw=black,text=black,inner sep = 0pt,rectangle]

  \node[state,fill=yellow] (A1)                   {$q_0$};
  \node[state]         (B1) [below = 1cm of A1] 	  {$q_0$};
  \node[state]         (C1) [below right = 1cm and 0.2cm of B1] 	  {$q_0$};
  \node[state]         (D1) [right = 1cm of A1] 	  {$q_0$};
  \node[state]         (E1) [right = 1cm of B1] 	  {$q_0$};

\node (x1) [right = .7cm of E1]  {\LARGE $\xrightarrow{\broadcast{\amess}}_\lossy$};

\node[state, fill=yellow] (A2)   [right = .7cm of x1]                {$q_1$};
  \node[state]         (B2) [below right = 1cm and 0.2cm of A2] 	  {$q_0$};
  \node[state]         (C2) [above = 1cm of A2] 	  {$q_0$};
  \node[state]         (D2) [right = 1cm of A2] 	  {$q_1$};
  \node[state]         (E2) [right = 1cm of C2] 	  {$q_0$};

\node (x2) [right = .7cm of D2]  {\LARGE $\xrightarrow{\broadcast{\bmess}_1}_\lossy$};

\node[state] (A3)   [right = .7cm of x2]                {$q_2$};
  \node[state]         (B3) [below right = 1cm and 0.2cm of A3] 	  {$r_1$};
  \node[state]         (C3) [above = 1cm of A3] 	  {$\bot$};
 \node[state,fill=yellow]         (D3) [right = 1cm of A3] 	  {$q_1$};
  \node[state]         (E3) [right = 1cm of C3] 	  {$q_0$};

\node(x3) [right = .7cm of D3]  {\LARGE
  $\xrightarrow{\broadcast{\bmess}_1}_\lossy$};
\node (y3) [below left= -.3cm and -.95cm of x3] {lost};

\node[state] (A4)   [below = 7cm of A1]                {$q_2$};
  \node[state]         (B4) [below right = 1cm and 0.2cm of A4] 	  {$r_1$};
  \node[state]         (C4) [above = 1cm of A4] 	  {$\bot$};
 \node[state]         (D4) [right = 1cm of A4] 	  {$q_2$};
  \node[state,fill=yellow]         (E4) [right = 1cm of C4] 	  {$q_0$};

\node(x4) [right = .7cm of D4]  {\LARGE $\xrightarrow{\broadcast{\amess}}_\lossy$};

  \node[state]         (C5) [right = .7cm of x4] 	  {$q_2$};
\node[state] (A5)   [above =1cm of C5]                {$\bot$};
\node[state] (B5)   [right = 1cm of A5]                {$q_0$};

  \node[state,fill=yellow]         (D5) [below = 1cm of B5] 	  {$q_3$};
  \node[state]         (E5) [below right = 1cm and 0.2cm of C5] 	  {$r_1$};

\node(x5) [right = .7cm of D5]  {\LARGE $\xrightarrow{\broadcast{\bmess}_2}_\lossy$};

  \node[state]         (C6) [right=.7cm of x5] 	  {$q_2$};
  \node[state]         (A6) [above = 1cm of C6] 	  {$\bot$};
  \node[state] (B6)   [right = 1cm of A6]                {$\bot$};
  \node[state]         (D6) [below = 1cm of B6] 	  {$q_4$};
  \node[state]         (E6) [below right = 1cm and 0.2cm of C6] 	  {$\Smiley[2]$};

 \path (A1) edge  	 (B1)
 (A1) edge  	 (E1)
 (D1) edge  	 (E1)
 (C1) edge  	 (B1)
 (C1) edge  	 (E1)

 (A2) edge  	 (B2)
 (D2) edge  	 (B2)
(A2) edge  	 (C2)
 (D2) edge  	 (E2)
 (D2) edge  	 (C2)

 (A3) edge  	 (B3)
 (D3) edge  	 (B3)
 (A3) edge  	 (C3)
 (D3) edge  	 (E3)
 (D3) edge  	 (C3)

 (A4) edge  	 (B4)
 (D4) edge  	 (B4)
 (A4) edge  	 (C4)
 (D4) edge  	 (E4)
 (D4) edge  	 (C4)

 (A5) edge  	 (C5)
 (A5) edge  	 (D5)
 (B5) edge  	 (D5)
 (D5) edge  	 (E5)
 (E5) edge  	 (C5)

 (A6) edge  	 (D6)
 (D6) edge  	 (B6)
 (A6) edge  	 (C6)
 (E6) edge  	 (C6)
 (E6) edge  	 (D6)
;

\end{tikzpicture}
}
\caption{Example of a lossy execution on the protocol from
  Figure~\ref{fig:ExampleProtocol}.}%
\label{fig:lossy_exec}
\end{figure}

\section{Tight bounds for reconfigurable and lossy broadcast networks}%
\label{sec:contributions}

In this section, we will show tight  bounds for the cutoff and the
minimal length of a witness execution for the coverability problem.
These  hold both for the reconfigurable and the lossy semantics.

\subsection{Upper bounds on cutoff and covering length for reconfigurable networks}
In view of providing upper bounds on the cutoff and covering length,
we first refine the polynomial time saturation algorithm
of~\cite{DSTZ12}, which computes all states that can be covered in the
reconfigurable semantics. We then show that, based on the underlying
computation, one can construct small witnesses for both the
reconfigurable and lossy semantics. These small witnesses have linear
number of nodes and quadratic number of steps. While it would be
enough to show the result for the lossy semantics only (since, given a
broadcast protocol $\BP$,
$\execset{\lossy}{\BP} \subseteq \execset{\reconfig}{\BP}$), for
pedagogical reasons, we provide the two proofs, starting with the
simplest one, \emph{i.e.} for reconfigurable semantics.

\medskip Let us fix for the rest of this section, a protocol $\BP =
(\States,\initStates,\Mess,\Trans)$. Delzanno \emph{et al.}  proposed
a polynomial time saturation algorithm to compute the set of all
states that can be covered under reconfigurable semantics for
broadcast networks~\cite{DSTZ12}. This algorithm maintains a set, say
$S$, of states that are known to be coverable.  Initially, $S$ is set
to $\initStates$. At each iteration, one adds to $S$ all states that
can be covered in one step from $S$. Formally, $S$ is augmented with
all $q' \in \States$ such that, either there exists $q \in S$ and
$\mess \in \Mess$ with $(q,\broadcast{\mess},q') \in \Trans$, or there
exist $p,q \in S$, $p'\in \States$ and $\mess \in \Mess$ such that
$(p,\broadcast{\mess},p') \in \Trans$ and $(q,\receive{\mess},q') \in
\Trans$.  In order to derive upper bounds on the cutoff and covering
length, we slightly modify the existing saturation algorithm to obtain
the one presented in Algorithm~\ref{modified}.  We augment the
saturation set $S$ by at most one element in each iteration.
Additionally, we associate an integer-valued variable $c$ that counts
the number of nodes that are sufficient to cover the set $S$ at the
current iteration.  Intuitively, when a state is added as the target
of a broadcast transition, we copy the node corresponding to the state
responsible for the broadcast, whereas in case of a reception
transition we need to copy two nodes involved in the action.

\begin{algorithm}[h]
  \caption{Refined saturation algorithm for coverability}%
  \label{modified}
\begin{algorithmic}[1]
\State $S := \initStates$; $c := |\initStates|$; $S' := \emptyset$
\While {$S \neq S'$}
\State $S' := S$
\If {$\exists (\state_1, \broadcast{\mess}, \state_2) \in \Delta$ s.t. $\state_1 \in S'$ and $\state_2 \not\in S'$}

\State $S := S \cup \{\state_2\}$; $c:= c+1$

\ElsIf {$\exists (\state_1, \broadcast{\mess}, \state_2) \in \Delta$ and $ (\state'_1, \receive{\mess}, \state'_2) \in \Delta$ s.t. $\state_1, \state_2, \state'_1 \in S'$ and $\state'_2 \not\in S'$}

\State $S := S \cup \{\state'_2\}$; $c := c+2$

\EndIf
\EndWhile

\State \Return $S$
\end{algorithmic}
\end{algorithm}

\begin{lemC}[\cite{DSTZ12}]%
\label{lemma:cover_reconfig}
Algorithm~\ref{modified} terminates and the set $S$ the algorithm
returns is exactly the set of coverable states.
In particular, $\cover{\reconfig}{\BP}{\targetset} \neq \emptyset$ iff
$\targetset \cap S \ne \emptyset$.
\end{lemC}

Let $S_0, S_1, \ldots, S_m$ be the sets after each iteration of the
algorithm, with $S_0 = I$ and $S_m = S$; and $c_0,c_1\ldots,c_m$ be
the respective values of the variable $c$ with $c_0 =
|\initStates|$. We fix an ordering on the states in $S$ on the basis
of insertion in $S$: for all $1\leq i \leq m$, $\state_i$ is such that
$\state_i \in S_i \setminus S_{i-1}$. In the following, we show the
desired upper bounds, proving that there exists an execution of size
$O(n)$ and length $O(n^2)$ covering at the same time all states of
$S_m$.

\begin{thm}%
  \label{th:Ubounds-reconfig}
  Let $\BP = (\States,\initStates,\Mess,\Trans)$ be a broadcast
  protocol, $\targetset \subseteq \States$ and $S$ be the set of
  states returned by Algorithm~\ref{modified} on input $\BP$. If
   $\targetset \cap S \ne \emptyset$,
  then there exists
  $\exec \in \cover{\reconfig}{\BP}{\targetset}$ with
  $\execsize{\exec} \le 2|\States|$ and
  $\execlength{\exec} \le 2 |\States|^2$.
\end{thm}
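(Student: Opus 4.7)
Let $S_0 = \initStates \subsetneq S_1 \subsetneq \cdots \subsetneq S_m$ record the successive values taken by $S$ along Algorithm~\ref{modified}, and $c_0, \ldots, c_m$ the corresponding values of the counter~$c$. The plan is to build, by induction on~$i$, a reconfigurable execution $\exec_i \in \execset{\reconfig}{\BP}$ satisfying four simultaneous invariants: (a) every state of $S_i$ labels some node in the final configuration of~$\exec_i$, (b) $\execsize{\exec_i} \le c_i$, (c) $\execlength{\exec_i} \le i^2$, and (d) $\nodeexeclength{\exec_i}{\node} \le i$ for every node $\node$ of $\exec_i$. The base case $i=0$ is handled by the zero-step execution containing exactly one node per state of $\initStates$.

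For the inductive step, consider the $(i{+}1)$-th iteration of Algorithm~\ref{modified}. In the \emph{broadcast} branch, invariant~(a) supplies a node $\node_1$ labelled by the relevant $\state_1 \in S_i$, whose active length is at most~$i$ by~(d). Applying the copycat property (Proposition~\ref{prop:copycat-reconfig_reconfig}) to $\exec_i$ and $\node_1$ produces an execution of length at most $i^2 + i$ with a fresh node $\node_\fresh$ ending in state~$\state_1$. A single additional step in which $\node_\fresh$ is reconfigured to have no neighbour and then broadcasts~$\mess$ moves it to $\state_{i+1}$ and yields $\exec_{i+1}$ of length at most $i^2 + i + 1 \le (i{+}1)^2$, with the node count increased by~$1$ (matching $c := c+1$) and active lengths still bounded by $i{+}1$. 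In the \emph{reception} branch I apply copycat twice---once to a node in $\state_1$ and once to a node in~$\state'_1$---extending the execution by at most $2i$ steps altogether, and then I perform one step in which the two fresh nodes become each other's unique neighbour and the $\state_1$-copy broadcasts $\mess$. This covers $\state'_2$, adds $2$ nodes (matching $c := c+2$), and satisfies $\execlength{\exec_{i+1}} \le i^2 + 2i + 1 = (i{+}1)^2$ with maximum active length $i{+}1$, preserving the invariants.

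The theorem then follows easily: each iteration adds one state to~$S$ and at most two to~$c$, so $m \le |\States| - |\initStates|$, whence $c_m \le |\initStates| + 2m \le 2|\States|$ and $\execlength{\exec_m} \le m^2 \le |\States|^2 \le 2|\States|^2$. Invariant~(a) guarantees that $\exec_m$ covers $\targetset$ whenever $\targetset \cap S \neq \emptyset$ (indeed one may stop at the first $i$ such that $\targetset \cap S_i \neq \emptyset$). The main obstacle is the joint maintenance of invariants~(c) and~(d): the length cost of each copycat invocation is exactly the active length of the copied node, so the $i$-bound on active length in~(d) is precisely what keeps the total length quadratic rather than worse, and the rest of the argument is bookkeeping.
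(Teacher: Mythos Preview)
Your proposal is correct and follows essentially the same route as the paper: an induction along the iterations of Algorithm~\ref{modified}, using the copycat property to spawn one or two fresh nodes per step while tracking the maximum active length. The only cosmetic difference is that you carry an explicit invariant $\execlength{\exec_i}\le i^2$ through the induction (using that each copycat application lengthens the run by at most the active length of the copied node), whereas the paper omits this invariant and recovers the length bound at the end via $\execlength{\exec}\le \execsize{\exec}\cdot\max_\node\nodeexeclength{\exec}{\node}$; your version yields a slightly sharper constant but otherwise the two arguments coincide.
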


Theorem~\ref{th:Ubounds-reconfig} is a consequence of the following
lemma.
  \begin{lem}%
    \label{lemma:1}
    For every step $i$ of Algorithm~\ref{modified}, there exists an
    initial configuration $\config_0$, a configuration $\config$ and a
    reconfigurable execution $\exec : \config_0
    \xrightarrow{*}_{\reconfig} \config$ such that $\labelf(\config) =
    S_i$, $\execsize{\exec} = c_i$, and
    $\max_{\node}\nodeexeclength{\exec}{\node} \leq i$.
  \end{lem}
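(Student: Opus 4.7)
I will proceed by induction on $i$, at each step mirroring the branch taken by Algorithm~\ref{modified} via the copycat property (Proposition~\ref{prop:copycat-reconfig_reconfig}) followed by one additional reconfigurable step. For the base case $i = 0$, take $\config_0$ to be the initial configuration with exactly one node per state of $\initStates$ and empty edge set, and let $\exec$ be the zero-step execution consisting of $\config_0$ alone; then $\labelf(\config_0) = \initStates = S_0$, $\execsize{\exec} = c_0$, and $\max_{\node} \nodeexeclength{\exec}{\node} = 0$.

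For the inductive step, assume $\exec$ witnesses the claim at step $i-1$, ending in a configuration $\config$ with $\labelf(\config) = S_{i-1}$, $\execsize{\exec} = c_{i-1}$ and $\max_{\node} \nodeexeclength{\exec}{\node} \leq i-1$, and split on which branch fires at iteration $i$. If it is the \emph{broadcast branch}, driven by $(\state_1, \broadcast{\mess}, \state_2) \in \Trans$ with $\state_1 \in S_{i-1}$ and $\state_2 \notin S_{i-1}$, I apply copycat to a node labelled $\state_1$ in $\config$, obtaining an execution with one extra node $\node_\fresh$ at $\state_1$, unchanged label set $S_{i-1}$ and max active length still $\leq i-1$ (copycat leaves pre-existing active lengths untouched and sets that of the fresh node to $\nodeexeclength{\exec}{\node^{\state_1}} \leq i-1$). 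I then append a single reconfigurable step in which the topology is reset so that $\node_\fresh$ is isolated and $\node_\fresh$ broadcasts $\mess$, moving alone to $\state_2$. This yields an execution with $c_{i-1}+1 = c_i$ nodes, label set $S_{i-1} \cup \{\state_2\} = S_i$, and max active length $\leq (i-1)+1 = i$.

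If instead the \emph{broadcast-plus-reception branch} fires, via $(\state_1, \broadcast{\mess}, \state_2)$ and $(\state'_1, \receive{\mess}, \state'_2)$ with $\state_1, \state_2, \state'_1 \in S_{i-1}$ and $\state'_2 \notin S_{i-1}$, I apply copycat twice: first to some node labelled $\state_1$, and then, in the resulting execution, to some node labelled $\state'_1$. This produces two fresh nodes $\node^{(1)}_\fresh$ at $\state_1$ and $\node^{(2)}_\fresh$ at $\state'_1$, while keeping the label set $S_{i-1}$ and the max active length $\leq i-1$. I then append a single reconfigurable step in which the topology is reset so that $\node^{(2)}_\fresh$ is the only neighbour of $\node^{(1)}_\fresh$, and have $\node^{(1)}_\fresh$ broadcast $\mess$: $\node^{(1)}_\fresh$ moves to $\state_2 \in S_{i-1}$, $\node^{(2)}_\fresh$ moves to $\state'_2$, and no other label changes (here we use $\state_2 \in S_{i-1}$ to conclude $\state_2 \in \labelf(\config')$ via the untouched original $\node^{\state_2}$). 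The resulting execution has $c_{i-1}+2 = c_i$ nodes, label set $S_{i-1} \cup \{\state'_2\} = S_i$, and max active length $\leq i$, since only $\node^{(1)}_\fresh$ performed an extra broadcast.

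The point that requires attention is that after the copycat(s) one can reset the topology so that the final broadcast reaches exactly the intended recipients (nobody in Case~1, only $\node^{(2)}_\fresh$ in Case~2); this is immediate from the reconfigurable semantics, which allows the edge set of each configuration to be chosen arbitrarily by the preceding transition (or, for the very first step, by the initial configuration). The active-length bookkeeping similarly relies on the observation (explicit in the informal description of copycat) that duplicating a node preserves the active lengths of all pre-existing nodes, so the bound $\leq i-1$ persists across the two copycat applications in Case~2 before the final broadcast pushes the maximum up to $i$.
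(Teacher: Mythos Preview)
Your proof is correct and follows essentially the same approach as the paper: induction on $i$, with the base case given by one node per initial state, and the inductive step handled by one (resp.\ two) applications of the copycat property followed by a single isolated (resp.\ two-node) broadcast, exactly as the paper does. The only differences are cosmetic (you index the induction $i{-}1\to i$ rather than $i\to i{+}1$, and you are slightly more explicit about the topology reset and the preservation of active lengths under copycat).
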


  \begin{proof}
    The lemma is proved by induction on $i$. The base case $i=0$ is
    obvious: take the initial configuration $\config_0$ with
    $|\initStates|$ nodes, and label each node with a different
    initial state; its size is $|\initStates|$, and the length of the
    execution is $0$, hence so is the maximum active length.

    To prove the induction step, we distinguish two cases: depending
    on whether $\state_{i+1}$ was added as the target state of a
    broadcast transition $\state \xrightarrow{\broadcast{\mess}}$ for
    some $\state \in S_i$; or whether $\state_{i+1}$ is the target
    state of a reception from some $\state \in S_i$ with matching
    broadcast between two states already in $S_i$.

    \textit{Case 1:} There exists $q \in S_i$ with
    $\state \xrightarrow{\broadcast{\amess}} \state_{i+1}$.
    We apply the induction hypothesis to step $i$, and exhibit an
    execution $\exec : \config_0 \xrightarrow{*}_{\reconfig} \config$
    such that $\labelf(\config) = S_i$, $\execsize{\exec} = c_i$
    and $\max_{\node}\nodeexeclength{\exec}{\node} \leq i$. Applying
    the copycat property (see
    Proposition~\ref{prop:copycat-reconfig_reconfig}), we construct an
    execution $\exec' : \config'_0 \xrightarrow{*}_\reconfig \config'$
    such that $\config'_0$ has one node more than $\config_0$, and,
    focusing on the nodes (since we are in a reconfigurable setting,
    edges in the configuration are not important), $\config'$
    coincides with $\config$, with an extra node $\node$ labelled by
    $q$. We then disconnect all nodes and extend with a transition
    $\config' \xrightarrow{\node,\broadcast{\mess}}_\reconfig
    \config''$, which makes only progress node $\node$ from $\state$
    to $\state_{i+1}$; the resulting execution is denoted
    $\exec''$. Then:
    \begin{enumerate}
    \item $\labelf(\config'') = S_i \cup \{\state_{i+1}\} = S_{i+1}$,
    \item $\execsize{\exec''} = c_i+1 = c_{i+1}$,
    \item
      $\max_{\node}\nodeexeclength{\exec''}{\node} \leq
      \max_{\node}\nodeexeclength{\exec}{\node}+1 \le i+1$; Indeed, the
      active length of the copycat node along $\rho'$ coincides with
      the active length of some existing node along $\rho$, and it is
      increased only by $1$ in $\rho''$.
    \end{enumerate}
    This proves the induction step in the first case.

    \medskip \textit{Case 2:} There exists
    $\state,\state',\state'' \in S_i$ with
    $\state \xrightarrow{\receive{\amess}} \state_{i+1}$ and
    $\state' \xrightarrow{\broadcast{\amess}} \state''$. The idea is
    similar to the previous case, but one should apply the copycat
    property twice, to both $\state$ and $\state'$. We formalize this.

    We apply the induction hypothesis to step $i$, and exhibit an
    execution $\exec : \config_0 \xrightarrow{*}_{\reconfig} \config$
    such that $\labelf(\config) = S_i$, $\execsize{\exec} = c_i$
    and $\max_{\node}\nodeexeclength{\exec}{\node} \leq i$. Applying
    the copycat property (see
    Proposition~\ref{prop:copycat-reconfig_reconfig}) twice, to both
    $\state$ and $\state'$, we construct an execution
    $\exec' : \config'_0 \xrightarrow{*}_\reconfig \config'$ such that
    $\config'_0$ has two nodes more than $\config_0$, and, focusing on
    the nodes, $\config'$ coincides with $\config$, with one extra
    node $\node$ labelled by $\state$ and one extra node $\node'$
    labelled by $\state'$. We then connect nodes $\node$ and $\node'$
    and disconnect all other nodes, and extend with a transition
    $\config' \xrightarrow{\node',\broadcast{\mess}}_\reconfig
    \config''$; this makes node $\node$ progress from $\state$ to
    $\state_{i+1}$ and node $\node'$ progress from $\state'$ to
    $\state''$; all other nodes are unchanged; the resulting execution
    is denoted $\exec''$. Then:
    \begin{enumerate}
    \item
      $\labelf(\config'') = S_i \cup \{\state'',\state_{i+1}\} =
      S_{i+1}$ since $\state'' \in S_i$,
    \item $\execsize{\exec''} = c_i+2 = c_{i+1}$,
    \item
      $\max_{\node}\nodeexeclength{\exec''}{\node} \leq
      \max_{\node}\nodeexeclength{\exec}{\node}+1 \le i+1$; Indeed the
      active length of any of the copycat node along $\rho'$
      coincides with the active length of some existing node along
      $\rho$, and it is increased by at most $1$ in $\rho''$.
    \end{enumerate}
    This proves the induction step in the second case, and
    concludes the proof of the lemma.
  \end{proof}\label{calculs}

  To conclude the proof of Theorem~\ref{th:Ubounds-reconfig}, we
  recall that Algorithm~\ref{modified} is sound and complete: the
  output set $S_m$ is precisely the set of states that can be
  covered. Hence, from Lemma~\ref{lemma:1}, we deduce that if
  $\cover{\reconfig}{\BP}{\targetset} \ne \emptyset$, then there is
  $\rho \in \cover{\reconfig}{\BP}{\targetset}$ such that:
  \begin{enumerate}
  \item $\labelf(\config) = S_m$;
  \item $\execsize{\exec} = c_m \le |\initStates|+ 2m \le
    |\initStates|+ 2(|\States|-|\initStates|) = 2|\States| - |\initStates|$;
  \item
    $\max_{\node}\nodeexeclength{\exec}{\node} \leq m \le |\States|-|\initStates|$.
  \end{enumerate}
  Therefore
  $ \execlength{\exec} \le \big(\execsize{\exec}\big) \cdot
  \Big(\max_{\node}\nodeexeclength{\exec}{\node}\Big) \le
  2|\States|^2$,
  so that we established the desired bounds for
  Theorem~\ref{th:Ubounds-reconfig}.

\subsection{Upper bounds on cutoff and covering length for lossy networks}

Perhaps surprisingly, Algorithm~\ref{modified} also computes the set
of states that can be covered by lossy executions. Concerning
coverable states, the reconfigurable and lossy semantics thus
agree. In Section~\ref{sec:succinctness}, we will show that
reconfigurable covering executions can be linearly more succinct than
lossy covering executions.

\begin{lem}%
\label{lemma:cover_lossy}
Algorithm~\ref{modified} returns the set $S$ of coverable states for lossy
broadcast networks. In particular, $\cover{\lossy}{\BP}{\targetset}
\neq \emptyset$ iff $\targetset \cap S \ne \emptyset$.
\end{lem}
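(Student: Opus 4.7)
The plan is to prove the iff direction by direction. The forward direction is immediate, and the converse adapts the inductive construction of Lemma~\ref{lemma:1}.

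For $\cover{\lossy}{\BP}{\targetset} \neq \emptyset \Rightarrow \targetset \cap S \neq \emptyset$, I would simply invoke the inclusion $\execset{\lossy}{\BP} \subseteq \execset{\reconfig}{\BP}$ recorded right before this statement: any lossy execution reinterprets as a reconfigurable one, switching to the empty topology at each lost broadcast. Together with Lemma~\ref{lemma:cover_reconfig}, this yields $\targetset \cap S \neq \emptyset$.

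For the converse, I plan to mimic the induction in Lemma~\ref{lemma:1}, replacing Proposition~\ref{prop:copycat-reconfig_reconfig} by a lossy analogue that I would establish first. Given a lossy execution $\exec$ with a node $\node^\state$ at state $\state$ in its final configuration, the lossy copycat would produce a new lossy execution $\exec'$ with one extra node $\node_\fresh$ also at $\state$. To build it I keep the topology of $\exec$ and add edges from $\node_\fresh$ to every neighbour of $\node^\state$ other than $\node^\state$ itself, and then replay $\exec$ in this enlarged topology, inserting, right after each broadcast by $\node^\state$, an extra \emph{lost} broadcast of the same message by $\node_\fresh$. All of $\node_\fresh$'s broadcasts being lost, none of the original nodes are perturbed, and because $\node_\fresh$ shares the incoming neighbours of $\node^\state$ it receives exactly the same sequence of messages and reaches state $\state$ too. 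The induction on Algorithm~\ref{modified}'s iterations then follows the pattern of Lemma~\ref{lemma:1}: the base case starts with $|\initStates|$ isolated nodes; a case~1 iteration applies lossy copycat once and has the new node perform a lost broadcast to reach $\state_{i+1}$; a case~2 iteration applies lossy copycat twice, adds one edge between the two fresh nodes, and has the sender-copy broadcast successfully so that the receiver-copy transitions to $\state_{i+1}$.

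The main obstacle will be case~2: when the $\state'$-copy broadcasts successfully, every one of its neighbours receives, including those it inherited from $\node^{\state'}$ through the copycat, and these collateral receptions could drive previously established witnesses off their states, breaking coverage of $S_i$. To control this I would strengthen the induction hypothesis so that newly added copycat nodes acquire only local connections---each copycat either stays isolated or belongs to a dedicated two-node gadget involving only other fresh nodes---so that at the moment of the key broadcast the sender-copy's only meaningful neighbour is the intended receiver (any additional fresh neighbour may freely transition to an arbitrary state without breaking $S_i$-coverage, since the states in $S_i$ are still witnessed by the preserved originals). When the node to be copied already lives in such a pair, I would resort to a ``paired copycat'' variant that duplicates the entire pair at once into a fresh isolated pair. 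This at most doubles the node count of the reconfigurable construction but keeps it linear in $|\States|$, which is sufficient for this existence statement; the tight cutoff bound for lossy networks is then obtained by a finer accounting in the subsequent theorem.
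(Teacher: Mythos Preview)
Your forward direction matches the paper exactly. For the converse, your high-level plan---a lossy copycat in which the fresh node performs only lost broadcasts, followed by an induction mirroring Lemma~\ref{lemma:1}---is precisely the paper's route (Proposition~\ref{prop:copycat-reconfig_lossy} and the lemma after it). You also correctly isolate the real obstacle in Case~2: the successful broadcast by the sender-copy reaches the neighbours it inherited from $\node^{\state'}$.

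Where you diverge is in the strengthened invariant. You aim for ``each main witness is isolated or sits in a dedicated two-node gadget'', and you introduce a paired copycat when the witness already lives in a pair. This invariant is not maintainable: after a Case~2 step the new main witness $\node_\fresh$ is adjacent both to $\node'_\fresh$ \emph{and} to whatever neighbours it inherited from $\node^\state$, so it may already have two neighbours; one further Case~2 iteration then breaks the two-node picture. More seriously, your paired copycat makes fresh nodes perform \emph{real} broadcasts during replay (to their partner), so once you add the cross-edge between the two main copies at the start---as the static topology forces---replay is no longer safe: the property ``all of $\node_\fresh$'s broadcasts are lost'' that you explicitly rely on no longer holds.

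The paper's invariant is both simpler and what actually closes the gap: every main witness is connected only to junk nodes (labelled~$\times$) and has itself never performed a successful broadcast. With this, the \emph{standard} lossy copycat suffices throughout---no paired variant is needed. The fresh copies inherit only $\times$-neighbours; adding the edge between the two fresh nodes is harmless during replay because neither ever broadcasts successfully; and after the key broadcast, $\node'_\fresh$ is relabelled~$\times$, so the new main witness $\node_\fresh$ again has only $\times$-neighbours and has made no real broadcast. This also keeps the node count at exactly~$c_i$, yielding the tight bound directly rather than the doubled count you anticipate.
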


Indeed, $\execset{\lossy}{\BP} \subseteq
\execset{\reconfig}{\BP}$.  Therefore $\cover{\lossy}{\BP}{\targetset}
\neq \emptyset$ implies $\cover{\reconfig}{\BP}{\targetset} \neq
\emptyset$ and by Lemma~\ref{lemma:cover_reconfig}, we conclude
$\targetset \cap S \ne \emptyset$.  The other direction of
Lemma~\ref{lemma:cover_lossy} is a consequence of the following
theorem.

 \begin{thm}%
 \label{th:Ubounds-lossy}
 Let $\BP = (\States,\initStates,\Mess,\Trans)$ be a broadcast
 protocol, $\targetset \subseteq \States$ and and $S$ be the set of
 states returned by Algorithm~\ref{modified} on input $\BP$.  If $F
 \cap S \ne \emptyset$, then there exists $\exec \in
 \cover{\lossy}{\BP}{\targetset}$ with $\execsize{\exec} \le
 2|\States|$ and $\execlength{\exec} \le 2 |\States|^2$.
 \end{thm}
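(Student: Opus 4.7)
My plan is to follow the same inductive structure as the proof of Theorem~\ref{th:Ubounds-reconfig}, with the reconfigurable copycat of Proposition~\ref{prop:copycat-reconfig_reconfig} replaced by a lossy analogue. The first step is to establish a \emph{lossy copycat} lemma: given a lossy execution $\exec$ with topology $(\Nodes,\Edges)$ and a node $\node^\state \in \Nodes$ at state $\state$ in the final configuration, there exists a lossy execution $\exec'$ obtained by adding one fresh node $\node_\fresh$ whose neighbors in the extended topology are exactly the neighbors of $\node^\state$ (but not $\node^\state$ itself), such that $\node_\fresh$ ends at state $\state$, every other node ends in its $\exec$-state, and $\nodeexeclength{\exec'}{\node_\fresh} = \nodeexeclength{\exec}{\node^\state}$. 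The construction has $\node_\fresh$ shadow $\node^\state$ throughout: every broadcast by $\node^\state$ in $\exec$, lost or successful, is reproduced in $\exec'$ and immediately followed by a lost broadcast of the same message by $\node_\fresh$, so that $\node_\fresh$'s state transitions track those of $\node^\state$ without disturbing any other node; since $\node_\fresh$ inherits $\node^\state$'s neighborhood, it also receives the same external messages as $\node^\state$ does.

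Second, I would replay the induction of Lemma~\ref{lemma:1} with the lossy copycat in place of the reconfigurable one. This yields, for every step $i$ of Algorithm~\ref{modified}, a lossy execution $\exec$ with $\execsize{\exec}=c_i$, final labels $S_i$, and $\max_\node \nodeexeclength{\exec}{\node}\le i$. In Case~1 at step $i+1$, one application of the lossy copycat produces a twin of the node at $\state\in S_i$, then a lost broadcast of $\mess$ moves the twin from $\state$ to $\state_{i+1}$. In Case~2, two applications of the lossy copycat produce twins of the nodes at $\state$ and $\state'$; we add a single edge between the two twins and then fire a successful broadcast of $\mess$ by the $\state'$-twin, which moves the $\state$-twin to $\state_{i+1}$ and the $\state'$-twin to $\state''$. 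The counting of nodes and active lengths proceeds exactly as in Lemma~\ref{lemma:1}, and the bounds $\execsize{\exec}\le 2|\States|$ and $\execlength{\exec}\le 2|\States|^2$ follow verbatim from its closing computation.

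The main obstacle is the final successful broadcast in Case~2: since the lossy topology is fixed, one cannot disconnect the $\state'$-twin from its inherited neighborhood before firing, so the broadcast also reaches those neighbors. The plan to handle this is twofold: (i)~sharpen the lossy copycat to add only the edges strictly required for tracking, namely to those neighbors of $\node^{\state'}$ that actually broadcast successfully to $\node^{\state'}$ in $\exec$, keeping the inherited neighborhood as small as possible; and (ii)~invoke the soundness of Algorithm~\ref{modified}: any inherited neighbor at $p\in S_i$ that unintentionally receives $\mess$ must transition to some $p'$ with $(p,\receive{\mess},p')\in\Trans$, and by the Case~2 rule of the algorithm such $p'$ is placed in $S$; with the redundancy in labels built up through iterated copycats, every state of $S_{i+1}$ still appears in the final configuration, preserving the inductive invariant.
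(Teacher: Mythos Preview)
You have correctly identified the central obstacle---in Case~2 the successful broadcast by the $\state'$-twin also reaches its inherited neighbours, which cannot be disconnected in the static lossy topology---but neither of your proposed fixes closes it. Fix~(i) only shrinks the inherited neighbourhood, it does not empty it: even restricting to neighbours that successfully broadcast to $\node^{\state'}$ during $\exec$, such neighbours exist whenever $\node^{\state'}$ ever received a message (which is typical once $i\ge 2$), and each carries a live label $p \in S_i$. Fix~(ii) then appeals to ``redundancy in labels built up through iterated copycats'' to absorb the damage, but your inductive invariant is merely $\labelf(\config) = S_i$, which guarantees no multiplicity; the disturbed neighbour may well be the \emph{unique} node labelled~$p$, so $p$ vanishes and the invariant fails at step $i{+}1$. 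Soundness of the algorithm only yields $p' \in S$, not $p' \in S_{i+1}$, so you cannot even argue that the resulting label set is contained in $S_{i+1}$. A further point you use without justification: adding an edge between the two fresh twins modifies the \emph{initial} topology, and you must argue that the replayed execution is unaffected---this requires knowing that neither twin performs a successful broadcast during the replay.

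The paper resolves the obstacle not by patching the step but by strengthening the induction. It designates, for each $\state \in S_i$, a \emph{main} node $\node^\main_\state$ with two extra properties: it has performed no successful broadcast so far, and all of its neighbours already carry an irrelevant label~$\times$. Copycats are always taken from main nodes, so the fresh node inherits only $\times$-neighbours, and the successful broadcast in Case~2 therefore cannot corrupt any meaningful label; the broadcasting twin is itself relabelled~$\times$ afterwards, preserving the invariant. The ``no successful broadcast'' clause is exactly what makes the extra edge between the two fresh twins harmless in the replayed execution. Your fix~(i) is a gesture towards this idea, but without tracking which nodes are expendable you cannot complete the argument.
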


 Before going to the proof of Theorem~\ref{th:Ubounds-lossy},
 we show a copycat property for the lossy broadcast networks, as a
 counterpart of Proposition~\ref{prop:copycat-reconfig_reconfig} for
 the lossy semantics. Since the communication topology is static in
 lossy networks, the following proposition explicitly relates the
 communication topologies in the initial execution and its copycat
 extension.
\begin{restatable}[Copycat for lossy semantics]{prop}{copycat}%
  \label{prop:copycat-reconfig_lossy}
  Given
  $\exec: \config_0 \to_\lossy \config_1 \cdots \to_\lossy \config_r$
  an  execution, with $\config_r = (\Nodes,\Edges,\labelf)$,
  for every $\state \in \labelf(\config_r)$, for every
  $\node^\state \in \Nodes$ such that
  $\labelf(\node^\state) = \state$, there exists $s \in \nats$ and an
   execution
  $\exec': \config'_0 \to_\lossy \config_1' \cdots \to_\lossy
  \config_s'$ with $\config_s' = (\Nodes',\Edges',\labelf')$ such that
  $\size{\Nodes'} = \size{\Nodes}{+}1$, there is an injection
  $\iota: \Nodes \to \Nodes'$ with for every $\node \in \Nodes$,
  $\labelf'(\iota(\node)) = \labelf(\node)$, and for the extra node
  $\node_\fresh \in \Nodes' \setminus \iota(\Nodes)$,
  $\labelf'(\node_\fresh) = \state$, for every $\node \in \Nodes$,
  $\node_\fresh \sim' \iota(\node)$ iff $\node^\state \sim \node$,
  $\nodeexeclength{\exec'}{\node_\fresh} =
  \nodeexeclength{\exec}{\node^\state}$, and
  $\execreallength{\exec'}{\node_\fresh} = 0$.
\end{restatable}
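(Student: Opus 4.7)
The plan is to construct $\exec'$ directly by simulating $\exec$. I set $\Nodes' = \Nodes \cup \{\node_\fresh\}$ with $\iota$ the canonical inclusion, fix once and for all the topology
$\Edges' = \iota(\Edges) \cup \{\{\node_\fresh,\iota(\node)\} \mid \node \in \Nodes,\ \node^\state \sim \node\}$, and initialise $\node_\fresh$ at the same initial label as $\node^\state$, so that $\config'_0$ is still initial. Since the lossy semantics keeps the topology static, $\Edges'$ remains frozen throughout $\exec'$, and this fixed choice immediately satisfies the neighbourhood requirement of the statement.

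Next, I translate each step $\config_{i-1} \xrightarrow{\node_b,\broadcast{\amess}}_\lossy \config_i$ of $\exec$ into one or two steps of $\exec'$, distinguishing two cases. If $\node_b \neq \node^\state$, I have $\iota(\node_b)$ perform the same broadcast with the same loss status; the original nodes behave as before and, by the choice of $\Edges'$, $\node_\fresh \sim' \iota(\node_b)$ holds exactly when $\node^\state \sim \node_b$ in $\exec$, so $\node_\fresh$ receives $\amess$ exactly when $\node^\state$ does. If $\node_b = \node^\state$, then $\iota(\node^\state)$ performs the same broadcast with the same loss status (by irreflexivity $\node_\fresh \not\sim' \iota(\node^\state)$, so $\node_\fresh$ never receives from $\iota(\node^\state)$), and I then append one additional step in which $\node_\fresh$ itself broadcasts $\amess$ \emph{lossily}, leaving every other node unchanged.

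A straightforward induction on $i$ then establishes the invariant that, after processing the first $i$ steps of $\exec$, the labels on $\iota(\Nodes)$ coincide with those of $\config_i$ via $\iota$ and $\labelf'(\node_\fresh) = \labelf_i(\node^\state)$. Taking $i=r$ yields $\labelf'(\node_\fresh) = \state$. The remaining bookkeeping is immediate from the construction: $\node_\fresh$ performs exactly one broadcast per broadcast of $\node^\state$ in $\exec$, so $\nodeexeclength{\exec'}{\node_\fresh} = \nodeexeclength{\exec}{\node^\state}$, and every such broadcast was declared lost, giving $\execreallength{\exec'}{\node_\fresh} = 0$.

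I do not anticipate any serious obstacle; the only subtle point is the case $\node_b = \node^\state$, where a single step of $\exec$ must be split into two steps of $\exec'$. The first step faithfully replicates the effect on $\iota(\Nodes)$, and the second step uses the lossy semantics crucially: declaring $\node_\fresh$'s broadcast lost is exactly what allows $\node_\fresh$ to advance while leaving the copied configuration untouched. This is the feature that was unavailable for Proposition~\ref{prop:copycat-reconfig_reconfig}, where an empty topology was used instead, and it is what forces the length of $\exec'$ to grow by $\nodeexeclength{\exec}{\node^\state}$ rather than remaining equal to that of $\exec$.
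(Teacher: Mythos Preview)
Your proposal is correct and follows essentially the same approach as the paper's own proof: the same fixed topology for $\node_\fresh$ (copying the neighbourhood of $\node^\state$), the same case split on whether the broadcasting node equals $\node^\state$, and the same two-step replacement (replay the broadcast via $\iota(\node^\state)$, then a lost broadcast by $\node_\fresh$) in the latter case. If anything, you are slightly more explicit than the paper in noting that irreflexivity of $\sim$ ensures $\node_\fresh \not\sim' \iota(\node^\state)$, which is the point that guarantees $\node_\fresh$ is unaffected by the first of the two replayed steps.
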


\begin{proof}
  First notice that, from our definition of lossy semantics, the
  topology should be the same in $\config_0$ and in $\config_r$, hence
  we can write $\config_0 = (\Nodes,\Edges,\labelf_0)$, and more
  generally, for every $i$, $\config_i = (\Nodes,\Edges,\labelf_i)$.
  Define $\Nodes'$ as a finite set such that $|\Nodes'| = |\Nodes|+1$,
  and fix an injection $\iota : \Nodes \to \Nodes'$. Write
  $\node_\fresh$ for the unique element of $\Nodes' \setminus
  \iota(\Nodes)$. Set $\labelf'_0(\iota(\node)) = \labelf_0(\node)$
  for every $\node \in \Nodes$, and $\labelf'_0(\node_\fresh)
  = \labelf_0(\node^\state)$. Define the edge relation $\Edges'$ by
  its induced edge relation $\sim'$ such that $\iota(\node) \sim'
  \iota(\node')$ iff $\node \sim \node'$, and $\node_\fresh \sim'
  \iota(\node')$ iff $\node^\state \sim \node'$.

  The idea will then be to make $\node_\fresh$ follow what
  $\node^\state$ is doing. Roughly, if $\node^\state$ is receiving a
  message to progress, then we will connect $\node_\fresh$ to a
  relevant node to also receive the message; if $\node^\state$ is
  broadcasting a message, then we will make $\node_\fresh$ broadcast a
  message and lose, so that no other node is impacted.

  Formally, we will show by induction on $i$ that for every $0 \le i
  \le r$, there is an execution $\exec'_i: \config'_0 \to_\lossy
  \config'_1 \dots \to_\lossy \config'_{f(i)}$ for some $f(i)$, such
  that $\labelf'_i(\iota(\node)) = \labelf_i(\node)$ for every $\node
  \in \Nodes$ and $\labelf'_i(\node_\fresh)
  = \labelf_i(\node^\state)$.  The initial case $i=0$ is obvious. We
  then assume that we have constructed a relevant $\exec'_i$ for some
  $i<r$, and we will extend it to $\exec'_{i+1}$ as follows. We make a
  case distinction depending on the nature of the step
  $\config_i \to_\lossy \config_{i+1}$:
  \begin{itemize}
  \item Assume $\gamma_i \xrightarrow{\node,\broadcast{\mess}}_\lossy
    \gamma_{i+1}$ is a broadcast message with $\node^\state \ne
    \node$, then $\rho'_{i+1}$ is obtained by extending $\rho'_i$ with
    the broadcast $\config'_{f(i)}
    \xrightarrow{\iota(\node),\broadcast{\mess}} \config'_{f(i)+1}$,
    with the condition that it should be lost if and only if it was
    lost in the original execution. For checking correctness, we
    distinguish two cases:
    \begin{itemize}
    \item the broadcast message was not lost, and $\node^\state \sim
      \node$. Then, it is the case that $\node_\fresh \sim'
      \iota(\node)$, hence $\node_\fresh$ also receives the
      message. By resolving properly the nondeterminism, we can make
      the label of $\node_\fresh$ become the same as the label of
      $\node^\state$ in $\config'_{f(i)+1}$. Note also that all nodes
      in $\iota(\Nodes)$ can progress to the same states as those of
      $\Nodes$ in $\config_{i+1}$;
    \item the broadcast message was lost, or
      $\node^\state \not\sim \node$, then it is the case that the
      label of $\node^\state$ has not been changed in
      $\gamma_i \xrightarrow{\node,\broadcast{\mess}}_\lossy
      \gamma_{i+1}$, and so will the label of the fresh node in
      $\config'_{f(i)}$.
    \end{itemize}
  \item Assume $\gamma_i
    \xrightarrow{\node^\state,\broadcast{\mess}}_\lossy \gamma_{i+1}$
    is a broadcast message, then we extend $\exec'_i$ with the two
    steps $\config'_{f(i)}
    \xrightarrow{\iota(\node^\state),\broadcast{\mess}}
    \config'_{f(i)+1} \xrightarrow{\node_\fresh,\broadcast{\mess}}
    \config'_{f(i)+2}$ (resolving nondeterminism in a similar way as
    in $\gamma_i \xrightarrow{\node^\state,\broadcast{\mess}}_\lossy
    \gamma_{i+1}$), and we make the last broadcast lossy whereas
    the broadcast from $\iota(\node^\state)$ is lossy if and only if
    it was lossy in $\config_i \to_\lossy \config_{i+1}$.
  \end{itemize}
  This concludes the induction. Notice that in the constructed
  execution, node $\node_\fresh$ does not make any real sending.
\end{proof}

For any configuration $\config = (\Nodes,\Edges,\labelf)$ and a node
$\node$, we write $\labelf(\node) = \times$ if $\node$ is not
important anymore in the execution, in other words all the required
conditions in $\config'$ such that
$\config \xrightarrow{*}_\lossy \config'$ are still satisfied whatever
$\labelf(\node)$ is.

Recall the notations introduced to study the saturation algorithm:
$S_0=\initStates, S_1, \ldots, S_m = S$ are the sets after each
iteration; $c_i$ is the value of the variable $c$ after iteration $i$;
and $\state_i$ is the state such that $\state_i \in S_i \setminus
S_{i-1}$ for all $1\leq i \leq m$.
We will refine the construction from the proof of Lemma~\ref{lemma:1}
(in the context of reconfigurable broadcast networks), and build
inductively a lossy execution covering all states in $S_i$. Since the
topology is static, some nodes which have ``finished their jobs'' will
remain connected to other nodes, and may therefore continue to change
states (contrary to Lemma~\ref{lemma:1} where they could be fully
disconnected). Hence, in every such execution, every state $\state \in
S_i$ (which is then covered by the execution) will have a main
corresponding node, whose label will remain $\state$. All nodes which
are not the main node of a state will be assigned $\times$, since
their labels will become meaningless.

We formalize this idea in the lemma below. However, for better
understanding, we also illustrate this inductive construction of a
witness execution in Figure~\ref{fig:example_saturation} on the simple
broadcast protocol from
Figure~\ref{fig:example_protocol}. Configurations are represented
vertically: they involve 10 nodes, and the communication topology is
given for the first configuration only, for the sake of
readability. To save space, several broadcasts (of the same message
type, from different nodes) may happen in a \emph{macrostep} that
merges several steps. This is for instance the case in the first
macrostep, where $a$ is being broadcast from the node in set $S_1$, as
well as from the first node in set $S_2$. Dashed arrows are used to
represent that a node is not involved in some macrostep and thus stays
in the same state. In the execution, the nodes that are performing a
real broadcast are colored yellow, the ones which change their state
upon reception of a message are colored gray, and blue nodes indicate
the main nodes for the coverable states.

\begin{figure}[ht]
  \centering
    \begin{tikzpicture}[shorten >=1pt,node distance=7mm and 1.5cm,on grid,auto,semithick]
        \everymath{\scriptstyle}
  \node[state,inner sep=1pt,minimum size=5mm] (q_0) {$q_0$};
  \path (q_0.90) edge[latex'-] ++(90:4mm);
\node[state,inner sep=1pt,minimum size=5mm] (q_1) [right = of q_0] {$q_1$};
\node[state,inner sep=1pt,minimum size=5mm] (q_2) [left = of q_0] {$q_2$};
\node[state,inner sep=1pt,minimum size=5mm] (q_3) [left = of q_2] {$q_3$};
\node[state,inner sep=1pt,minimum size=5mm] (q_4) [right = of q_1] {$q_4$};
\node[state,inner sep=1pt,minimum size=5mm] (q_5) [right = of q_4] {$q_5$};
\node[state,inner sep=1pt,minimum size=5mm] (q_6) [left = of q_3] {$q_6$};

 \path[-latex']
 (q_0) edge node {$\broadcast{\amess}$} (q_1)
(q_0) edge node[above] {$\receive{\amess}$} (q_2)
 (q_2) edge node[above] {$\broadcast{\bmess}$} (q_3)
 (q_1) edge node {$\receive{\bmess}$} (q_4)
 (q_4) edge node {$\broadcast{\cmess}$} (q_5)
(q_3) edge node[above] {$\receive{\cmess}$} (q_6)
;

\end{tikzpicture}
\caption{Illustrating example for the saturation algorithm.}\label{fig:example_protocol}
\end{figure}
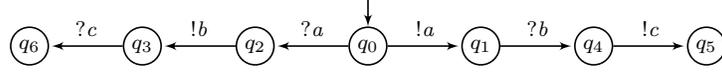

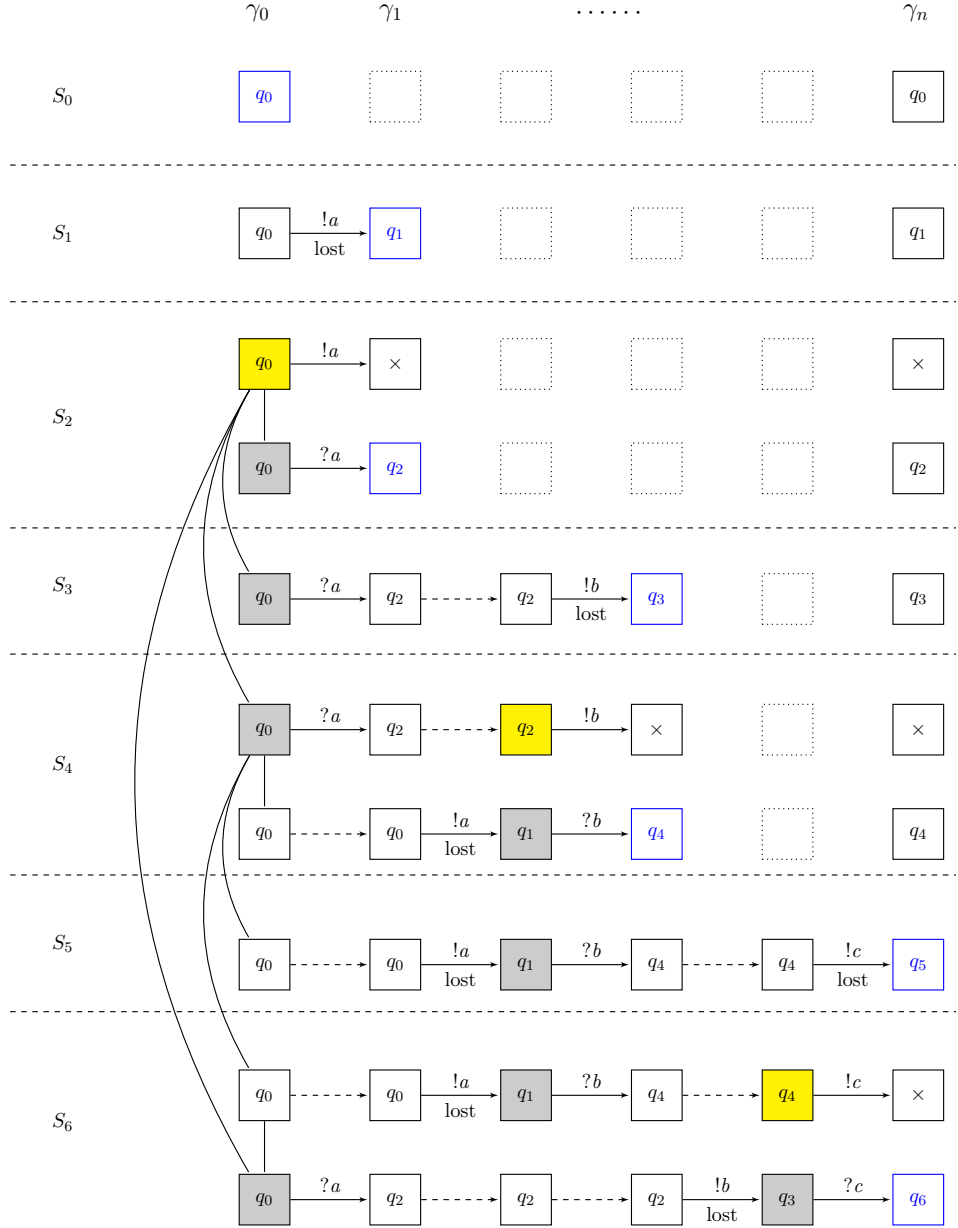
\begin{figure}[ht]
\scalebox{.7}{
\begin{tikzpicture}[>=stealth',shorten >=1pt,auto,node distance=1.5cm,
                    semithick]
  \tikzstyle{every state}=[fill=none,draw=black,text=black,inner sep = 0pt,rectangle]

\node (S0)  {$S_0$};
\node (S1) [below =  2cm of S0] {$S_1$};
\node (S2) [below =  2.9cm of S1] {$S_2$};
\node (S3) [below =  2.6cm of S2] {$S_3$};
\node (S4) [below =  2.8cm of S3] {$S_4$};
\node (S5) [below =  2.8cm of S4] {$S_5$};
\node (S6) [below =  2.8cm of S5] {$S_6$};

  \node (c1)  [above right = 1cm and 3cm of S0] {\Large $\gamma_0$};
    \node (c2)  [ right = 1.8cm of c1] {\Large $\gamma_1$};
      \node (c3)  [ right = 3cm of c2] {\Large $\cdots\cdots$};
        \node (c4)  [ right =4.6cm of c3] {\Large $\gamma_n$};

  \node[state, blue] (1)       [right = 3cm of S0]             {$q_0$};
  \node[state]         (2) [right = 3cm of S1] 	  {$q_0$};
  \node[state, fill=yellow]         (3) [below = of 2] 	  {$q_0$};
 \node[state, fill = gray!40] (4)         [below = 1cm of 3]           {$q_0$};
  \node[state, fill = gray!40]         (5) [below =  of 4] 	  {$q_0$};
  \node[state, fill = gray!40]         (6) [below = of 5] 	  {$q_0$};
 \node[state] (7)            [below = 1cm of 6]        {$q_0$};
  \node[state]         (8) [below = of 7] 	  {$q_0$};
  \node[state]         (9) [below = of 8] 	  {$q_0$};
 \node[state, fill = gray!40] (10)          [below = 1cm of 9]          {$q_0$};

 \node[state, blue]         (2') [right = of 2] 	  {$q_1$};
\node[state]         (3') [right = of 3] 	  {$\times$};
 \node[state, blue] (4')         [right = of 4]           {$q_2$};
  \node[state]         (5') [right =  of 5] 	  {$q_2$};
  \node[state]         (6') [right = of 6] 	  {$q_2$};
 \node[state] (7')            [right = of 7]        {$q_0$};
  \node[state]         (8') [right = of 8] 	  {$q_0$};
  \node[state]         (9') [right = of 9] 	  {$q_0$};
 \node[state] (10')          [right = of 10]          {$q_2$};

  \node[state]         (5-) [right =  of 5'] 	  {$q_2$};
  \node[state, fill = yellow]         (6-) [right = of 6'] 	  {$q_2$};
 \node[state, fill = gray!40] (7-)            [right = of 7']        {$q_1$};
  \node[state, fill = gray!40]         (8-) [right = of 8'] 	  {$q_1$};
  \node[state, fill = gray!40]         (9-) [right = of 9'] 	  {$q_1$};
 \node[state] (10-)          [right = of 10']          {$q_2$};

 \node[state, blue]         (5'') [right =  of 5-] 	  {$q_3$};
   \node[state]         (6'') [right = of 6-] 	  {$\times$};
 \node[state, blue] (7'')            [right = of 7-]        {$q_4$};
  \node[state]         (8'') [right = of 8-] 	  {$q_4$};
  \node[state]         (9'') [right = of 9-] 	  {$q_4$};
 \node[state] (10'')          [right = of 10-]          {$q_2$};

\node[state]         (8''') [right = of 8''] 	  {$q_4$};
  \node[state, fill = yellow]         (9''') [right = of 9''] 	  {$q_4$};
 \node[state, fill = gray!40] (10''')          [right = of 10'']          {$q_3$};

\node[state, blue]         (8--) [right = of 8'''] 	  {$q_5$};
  \node[state]         (9--) [right = of 9'''] 	  {$\times$};
 \node[state, blue] (10--)          [right = of 10''']          {$q_6$};

 \node[state,dotted]         (a1) [right = of 1] 	  {};
 \node[state,dotted]         (a2) [right = of a1] 	  {};
 \node[state,dotted]         (a3) [right = of a2] 	  {};
 \node[state,dotted]         (a4) [right = of a3] 	  {};
 \node[state]         (a5) [right = of a4] 	  {$q_0$};

\node[state,dotted]         (b2) [right = of 2'] 	  {};
 \node[state,dotted]         (b3) [right = of b2] 	  {};
 \node[state,dotted]         (b4) [right = of b3] 	  {};
 \node[state]         (b5) [right = of b4] 	  {$q_1$};

\node[state,dotted]         (c2) [right = of 3'] 	  {};
 \node[state,dotted]         (c3) [right = of c2] 	  {};
 \node[state,dotted]         (c4) [right = of c3] 	  {};
 \node[state]         (c5) [right = of c4] 	  {$\times$};

\node[state,dotted]         (d2) [right = of 4'] 	  {};
 \node[state,dotted]         (d3) [right = of d2] 	  {};
 \node[state,dotted]         (d4) [right = of d3] 	  {};
 \node[state]         (d5) [right = of d4] 	  {$q_2$};

 \node[state,dotted]         (e4) [right = of 5''] 	  {};
 \node[state]         (e5) [right = of e4] 	  {$q_3$};

 \node[state,dotted]         (f4) [right = of 6''] 	  {};
 \node[state]         (f5) [right = of f4] 	  {$\times$};

 \node[state,dotted]         (g4) [right = of 7''] 	  {};
 \node[state]         (g5) [right = of g4] 	  {$q_4$};

 \path (3) edge 	node {} (4)
	(3) edge [bend right]	node [below ] {} (5)
	 (3) edge [bend right]	node [below] {} (6)
	 (3) edge [bend right]	node [below] {} (10)
	 (6) edge	node  {} (7)
	(6) edge[bend right]	node [above] {} (8)
	 (6) edge [bend right]	node [below] {} (9)
	 (9) edge	node {} (10)

	 (2) edge [-latex']	node [above] {$\broadcast{\amess}$} node[below] {lost}(2')
	 (3) edge [-latex']	node [above] {$\broadcast{\amess}$} (3')
	 (4) edge [-latex']	node [above] {$\receive{\amess}$} (4')
	 (5) edge [-latex']	node [above] {$\receive{\amess}$} (5')
	 (6) edge [-latex']	node [above] {$\receive{\amess}$} (6')
	 (7') edge [-latex']	node [above] {$\broadcast{\amess}$} node[below] {lost}(7-)
	 (8') edge [-latex']	node [above] {$\broadcast{\amess}$} node[below] {lost}(8-)
	 (9') edge [-latex']	node [above] {$\broadcast{\amess}$} node[below] {lost}(9-)
	 (10) edge [-latex']	node [above] {$\receive{\amess}$} (10')

	(7) edge [-latex',dashed]	(7')
	 (8) edge [-latex',dashed]	 (8')
	 (9) edge [-latex',dashed]	(9')

	(5') edge [-latex',dashed]	(5-)
	 (6') edge [-latex',dashed]	 (6-)
	 (10') edge [-latex',dashed]	(10-)

	(8'') edge [-latex',dashed]	(8''')
	 (9'') edge [-latex',dashed]	 (9''')
	 (10'') edge [-latex']	node [above] {$\broadcast{\bmess}$} node[below] {lost}(10''')

	 (5-) edge [-latex']	node [above] {$\broadcast{\bmess}$} node[below] {lost}(5'')
	(6-) edge [-latex']	node [above] {$\broadcast{\bmess}$} (6'')
	 (7-) edge [-latex']	node [above] {$\receive{\bmess}$} (7'')
	 (8-) edge [-latex']	node [above] {$\receive{\bmess}$} (8'')
	 (9-) edge [-latex']	node [above] {$\receive{\bmess}$} (9'')
	 (10-) edge [-latex',dashed]	(10'')

	 (8''') edge [-latex']	node [above] {$\broadcast{\cmess}$} node[below] {lost}(8--)
	(9''') edge [-latex']	node [above] {$\broadcast{\cmess}$} (9--)
	 (10''') edge [-latex']	node [above] {$\receive{\cmess}$} (10--)
;

\draw [dashed] (S0) ++(-1, -1.3) -- +(18,0);
\draw [dashed] (S1) ++(-1, -1.3) -- +(18,0);
\draw [dashed] (S2) ++(-1, -2.1) -- +(18,0);
\draw [dashed] (S3) ++(-1, -1.3) -- +(18,0);
\draw [dashed] (S4) ++(-1, -2.1) -- +(18,0);
\draw [dashed] (S5) ++(-1, -1.3) -- +(18,0);
\end{tikzpicture}
}

\caption{Lossy covering execution from the saturation algorithm on the
  protocol
  in~Figure~\ref{fig:example_protocol}. Configurations are represented vertically; for readability, macrosteps merge several broadcasts.}\label{fig:example_saturation}
\end{figure}

\begin{lem}
  For every step $i$ of the refined saturation algorithm, there exists
  a configuration $\config=(\Nodes,\Edges,\labelf)$ and an execution
  $\exec : \config_{0} \xrightarrow{*}_\lossy \config$ such that:
  \begin{itemize}
  \item $\labelf(\config) \setminus \{\times\} = S_i$ and
    $\execsize{\exec} = c_i$,
  \item $\max_{\node}\nodeexeclength{\exec}{\node} \leq i$ and
    $\max_\node \execreallength{\exec}{\node} \le 1$,
  \item for every $\state \in S_i$, there exists
    $\node^\main_\state \in \Nodes$ such that
    \begin{itemize}
    \item $\labelf(\node^\main_\state) = \state$ and
      $\execreallength{\exec}{\node^\main_\state} =0$,
    \item $\node^\main_\state \sim \node$ implies
      $\labelf(\node) = \times$, and
      if $\node \notin \{\node^\main_\state \mid \state \in S_i\}$,
      then $\labelf(\node) = \times$.
    \end{itemize}
    \end{itemize}
\end{lem}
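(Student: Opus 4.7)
The plan is to proceed by induction on $i$. For the base case $i = 0$, take the configuration with $|\initStates|$ isolated nodes, one per initial state, and the zero-step execution: each node is its own main node, trivially with zero real broadcasts and vacuously only $\times$-neighbors.

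For the inductive step, let $\exec$ be the execution from the induction hypothesis at step $i$, ending at $\config$. In the easier \textbf{Case 1} (when $\state_{i+1}$ is added because of a transition $\state \xrightarrow{\broadcast{\mess}} \state_{i+1}$ for some $\state \in S_i$), I would apply Proposition~\ref{prop:copycat-reconfig_lossy} to $\node^\main_\state$ to obtain an extended execution $\exec'$ with one fresh node $\node_\fresh$ labeled $\state$, with zero real broadcasts, and whose neighbors are exactly the $\iota$-images of $\node^\main_\state$'s neighbors (all $\times$ by the induction hypothesis). Then append one \emph{lost} broadcast of $\mess$ by $\node_\fresh$, transitioning it to $\state_{i+1}$; since the broadcast is lost, no other label is disturbed, the real broadcast count of $\node_\fresh$ stays at zero, and its active length grows to at most $i+1$. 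Declare $\node^\main_{\state_{i+1}} := \node_\fresh$; all required invariants are immediate.

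The main obstacle is \textbf{Case 2}, where $\state_{i+1}$ arises from a reception $\state \xrightarrow{\receive{\mess}} \state_{i+1}$ matched by a broadcast $\state' \xrightarrow{\broadcast{\mess}} \state''$ with $\state,\state',\state'' \in S_i$. Since the lossy topology is static, the future receiver must already be adjacent to the future broadcaster in the initial configuration, but sequentially applying Proposition~\ref{prop:copycat-reconfig_lossy} twice would yield two fresh nodes with disjoint $\iota$-image neighborhoods and no connecting edge. My plan is therefore to construct the extended initial configuration directly, adding two fresh nodes $\node_1$ (mirroring $\node^\main_\state$) and $\node_2$ (mirroring $\node^\main_{\state'}$) in one shot, with the adjacencies prescribed by the copycat construction \emph{plus} an additional edge $\node_1 \sim \node_2$. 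This extra edge is harmless during the mirroring because $\node^\main_\state$ and $\node^\main_{\state'}$ have zero real broadcasts in $\exec$ by the induction hypothesis, so all mirrored broadcasts of $\node_1$ and $\node_2$ are lost and no message ever crosses the new edge during the mirroring phase. At the end of the mirroring, $\node_1$ is labeled $\state$ and $\node_2$ is labeled $\state'$; append one last step in which $\node_2$ performs a \emph{real} broadcast of $\mess$, so that $\node_1$ receives and transitions to $\state_{i+1}$ (becoming $\node^\main_{\state_{i+1}}$ with zero real broadcasts), while $\node_2$ moves to $\state''$ and is henceforth recorded as $\times$ since no further step depends on its label (likewise for $\node_2$'s other neighbors, which are $\iota$-images of $\node^\main_{\state'}$'s already-$\times$ neighbors). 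One then checks the bookkeeping: $\execsize{\exec''} = c_i + 2 = c_{i+1}$, active length bounded by $i+1$ on every node, at most one real broadcast per node (the only new one is by $\node_2$, which previously had none), and an all-$\times$ neighborhood for the new main node.
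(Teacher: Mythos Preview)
Your proposal is correct and follows essentially the same approach as the paper's proof: induction on $i$, same base case, same Case~1 via a single application of the lossy copycat followed by a lost broadcast, and in Case~2 the same key idea of adding two fresh copycat nodes together with an extra edge between them, justified by the fact that the fresh nodes perform no successful broadcasts during the mirroring phase so the added edge is inert. The only cosmetic difference is that the paper phrases Case~2 as ``apply the copycat twice, then retroactively add the edge $\node_\fresh \sim \node'_\fresh$ in the initial configuration and replay,'' whereas you build the two-node extension with the extra edge in one shot; the argument and the invariant checks are otherwise identical.
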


\begin{proof}
  We do the proof by induction on $i$. The case $i=0$ is obvious, by
  picking one main node per initial state in $\initStates$, and by
  disconnecting all nodes; hence forming an initial configuration
  satisfying all the requirements.

  To prove the induction step, we distinguish two cases: depending on
  whether $\state_{i+1}$ was added as the target state of a broadcast
  action $\broadcast{\mess}$ from some $\state \in S_i$; or whether
  $\state_{i+1}$ is the target state of a reception from some $\state
  \in S_i$ with matching broadcast between two states already in
  $S_i$.

  \textit{Case 1:} There exists $\state \in S_i$ with $\state
  \xrightarrow{\broadcast{\mess}} \state_{i+1}$. We apply the
  induction hypothesis to step $i$, and exhibit the various elements
  of the statement. Applying the copycat property for lossy broadcast
  systems (that is, Proposition~\ref{prop:copycat-reconfig_lossy})
  with node $\node^\main_\state$, we build an execution $\exec' :
  \config'_0 \xrightarrow{*}_\lossy \config'$ such that $\config' =
  (\Nodes,\Edges,\labelf')$ with $|\Nodes'| = |\Nodes|+1$, and an
  appropriate injection $\iota$. The fresh node $\node_\fresh$ is
  connected to nodes to which $\node^\main_\state$ was connected
  before; hence, by induction hypothesis, it is only connected to
  nodes labelled with $\times$.
  Then we extend $\exec'$ with $\config'
  \xrightarrow{\node_\fresh,\broadcast{\mess}} \config''$ and lose the
  message (this is for condition
  $\execreallength{\exec}{\node^\main_\state} =0$ to be satisfied). We
  declare $\node^\main_{\state_{i+1}} = \node_\fresh$. All
  requirements for $\config''$ are easily checked to be satisfied
  (when a node is labelled with $\times$ in $\config'$, then it
  remains labelled by $\times$ in $\config''$).

  \textit{Case 2:} There exist $\state,\state',\state'' \in S_i$ such
  that $\state \xrightarrow{\receive{\mess}} \state_{i+1}$ and
  $\state' \xrightarrow{\broadcast{\mess}} \state''$. We apply the
  induction hypothesis to step $i$, and exhibit the various elements
  of the statement. Applying twice the copycat property (that is,
  Proposition~\ref{prop:copycat-reconfig_lossy}), once with node
  $\node^\main_\state$ and once with node $\node^\main_{\state'}$, we
  build an execution $\exec' : \config'_0 \xrightarrow{*}_\lossy
  \config'$ such that $\config' = (\Nodes,\Edges,\labelf')$ with
  $|\Nodes'| = |\Nodes|+2$, and an appropriate injection $\iota$. The
  two fresh nodes $\node_\fresh$ and $\node'_\fresh$ are only
  connected to $\times$-nodes in $\config'$ (by induction hypothesis
  on $\node^\main_\state$ and $\node^\main_{\state'}$
  respectively).
  We transform $\config'_0$ into $\config''_0$ by connecting the two
  nodes $\node_\fresh$ and $\node'_\fresh$. By
  Proposition~\ref{prop:copycat-reconfig_lossy}, we know that those
  two nodes do not perform any real sending (\emph{i.e.},
    $\execreallength{\exec'}{\node_\fresh} = 0$ and
   $\execreallength{\exec'}{\node_\fresh'} = 0$), hence this new connection
  will not affect the labels of the nodes, and we can safely apply the
  same transitions as in $\exec'$ from $\config''_0$ to get an
  execution $\exec'' : \config''_0 \xrightarrow{*}_\lossy \config''$,
  where $\config''$ coincides with $\config'$, with an extra
  connection between nodes $\node_\fresh$ and $\node'_\fresh$. Then,
  we extend $\exec''$ with $\config''
  \xrightarrow{\node'_\fresh,\broadcast{\mess}} \config'''$. We assume
  it is a real sending, hence: node $\node_\fresh$ can progress from
  state $\state$ to $\state_{i+1}$, and node $\node'_\fresh$ can
  progress from $\state'$ to $\state''$. All other nodes which are
  connected to $\node'_\fresh$ are labelled by $\times$ in $\config''$, hence cannot
  be really affected by that sending. We relabel $\node'_\fresh$ to
  $\times$, and declare $\node_{\state_{i+1}}^\main = \node_\fresh$.  The
  expected conditions of the statement are easily checked to be
  satisfied by this new execution.
\end{proof}

To conclude the proof of Theorem~\ref{th:Ubounds-lossy}, we perform
the same computations as in the reconfigurable case and obtain the
desired bounds on the cutoff and covering
length.

\subsection{Matching lower bounds for reconfigurable and lossy  networks}

In this section, we
show that the linear bound on the cutoff and the quadratic bound on
the length of witness executions are tight, both for the
reconfigurable and the lossy broadcast networks.

\begin{thm}%
  \label{th:Lbounds-reconfig}
  There exists a family of broadcast protocols ${(\BP_n)}_{n}$ with
  $\BP_n =(\States_n,\initStates_n,\Mess_n,\Trans_n)$, and
  target states $\targetset_n \subseteq \States_n$ with $|\States_n|
  \in O(n)$, such that for every $n$,
  $\cover{\reconfig}{\BP_n}{\targetset_n} \neq \emptyset$,
  $\cover{\lossy}{\BP_n}{\targetset_n} \neq \emptyset$, and any
  witness reconfigurable or lossy execution has size $O(n)$ and length
  $O(n^2)$.
\end{thm}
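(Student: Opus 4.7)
My plan is to exhibit a concrete family of protocols $(\BP_n)_n$ with $|\States_n| = O(n)$ that witnesses the tightness of both bounds simultaneously, and then argue that every reconfigurable (and hence every lossy) covering execution requires $\Omega(n)$ nodes and $\Omega(n^2)$ steps. The family $\BP_n$ has two disjoint chains: a \emph{main} chain $m_0 \to m_1 \to \cdots \to m_n$ whose only transitions are receptions $m_{i-1} \xrightarrow{\receive{\text{done}_i}} m_i$, and a \emph{helper} chain $h_0 \to h_1 \to \cdots \to h_n \to h_\dagger$ whose advancing transitions are \emph{broadcasts} $h_{i-1} \xrightarrow{\broadcast{a}} h_i$, together with one-shot terminating broadcasts $h_i \xrightarrow{\broadcast{\text{done}_i}} h_\dagger$. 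All unspecified receptions are the self-loops required by receptive completeness. The initial states are $\{m_0, h_0\}$, the target is $\targetset_n = \{m_n\}$, and the protocol uses only $2n+3$ states. A straightforward witness execution covers $m_n$ with $n+1$ nodes (one main, $n$ helpers) and exactly $\sum_{i=1}^{n}(i+1) = \Theta(n^2)$ broadcasts, which shows the bound can be attained.

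For the cutoff lower bound, I would observe that covering $m_n$ requires delivering $\text{done}_1, \ldots, \text{done}_n$ to the main node in that order, and that each broadcast of $\text{done}_i$ is one-use because its source moves irreversibly to the sink $h_\dagger$. Since the helper-chain states and main-chain states are disjoint, no node can play both roles and no helper can broadcast more than one $\text{done}_i$. Hence every covering execution contains at least $n$ distinct helpers plus the main node, so $\execsize{\rho} \ge n+1 = \Omega(|\States_n|)$.

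For the length lower bound, I would exploit the crucial design choice that helpers advance along their chain by \emph{broadcasting}, not by receiving. The key invariant is: in every reachable configuration of $\BP_n$, each helper node is in state $h_k$ with $k$ equal to the number of $a$-broadcasts that \emph{this specific node} has itself performed. This invariant is preserved by every step of the reconfigurable (and hence lossy) semantics, regardless of topology changes, copycat duplications, or message losses, precisely because $\receive{a}$ is a self-loop at every helper state and so another node's broadcast can never push a helper forward. Consequently, the helper that ultimately broadcasts $\text{done}_i$ must itself have performed $i$ distinct broadcasts of $a$, and summing over the $n$ necessary helpers gives $\sum_{i=1}^n(i+1) = \Omega(n^2)$ broadcasts in every covering execution, that is $\execlength{\rho} = \Omega(|\States_n|^2)$.

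The main obstacle is precisely ruling out parallelism shortcuts, since the reconfigurable semantics with the copycat property is otherwise extremely permissive: a priori one could fear that preparing helpers could be shared across stages. The invariant above is what defeats this, since it is entirely topology-independent: an $a$-broadcast advances only its sender and leaves every other node untouched. The lossy lower bounds then follow for free from the inclusion $\cover{\lossy}{\BP_n}{\targetset_n} \subseteq \cover{\reconfig}{\BP_n}{\targetset_n}$, because any lossy covering execution is in particular a reconfigurable one and so inherits both bounds.
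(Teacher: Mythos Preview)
Your proposal is correct and gives a genuinely different construction from the paper's. The paper uses a \emph{single} chain $q_0 \xrightarrow{\broadcast{a_1}} q_1 \xrightarrow{\receive{b_1}} q_2 \xrightarrow{\broadcast{a_2}} q_3 \xrightarrow{\receive{b_2}} \cdots \xrightarrow{\receive{b_n}} \Smiley$ with self-loop broadcasts $q_{2i-1} \xrightarrow{\broadcast{b_i}} q_{2i-1}$. Its cutoff argument is that the \emph{last} node to broadcast $b_i$ can never leave $q_{2i-1}$ (the only exit is $\receive{b_i}$, which by assumption never occurs again), so these $n$ last-broadcasters are pairwise distinct and distinct from the node reaching $\Smiley$. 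Its length argument then counts, for each $i$, the $n{+}2{-}i$ nodes that must traverse the $\broadcast{a_i}$ edge, giving $\sum_i (n{+}2{-}i) + n = \Theta(n^2)$.

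Your two-chain design separates the roles explicitly: the main chain only receives, the helper chain only broadcasts, and the sink $h_\dagger$ replaces the paper's ``stuck at $q_{2i-1}$'' phenomenon by an irreversible one-shot transition. This buys you a cleaner invariant: a helper at $h_k$ has performed \emph{exactly} $k$ broadcasts of $a$, which is immediate because receptions are self-loops on the helper side. The paper's invariant (last broadcaster of $b_i$ is trapped) is slightly more delicate since one must argue about the \emph{last} occurrence of a message rather than about a per-node counter. On the other hand, the paper's protocol has a single initial state and one chain, which is marginally more economical. Both constructions yield the same asymptotic bounds and the same reduction of the lossy case to the reconfigurable case via $\execset{\lossy}{\BP} \subseteq \execset{\reconfig}{\BP}$.

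One small point of precision: your invariant as stated (``each helper node is in state $h_k$ with $k$ equal to the number of $a$-broadcasts'') should be read as applying to helpers not yet in $h_\dagger$; the form you actually use in the length argument --- a helper broadcasting $\text{done}_i$ is at $h_i$ and has therefore done exactly $i$ prior $a$-broadcasts --- is the correct and sufficient statement.
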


\begin{proof}
  Consider $\BP_n$, as depicted in
  Figure~\ref{fig:bp-Lbounds-reconfig} with $2n{+}1$ states and $F_n =
  \{\Smiley\}$. Any covering reconfigurable execution involves at
  least $n{+}1$ nodes, and has at least $\frac{n^2{+}5n}{2}$
  steps. Indeed, to reach \Smiley, all the $b_i$'s have to be
  broadcast at least once and the node responsible for the last
  broadcast of $b_i$ stays forever in state $q_i$. An additional node
  must reach \Smiley, so that, at least $n{+}1$ nodes are required. In
  the minimal covering execution, there will be exactly one node in
  each $q_i$.  Moreover, at least $n{+}2{-}i$ broadcasts of $a_i$ need
  to happen in addition to a broadcast of each of the $b_i$'s. As a
  consequence, the covering length in reconfigurable semantics is at
  least $n+ \sum_{i=1}^n (n{+}2{-}i) = n+ n^2 + 2n -
  \frac{n(n{+}1)}{2} = \frac{n^2{+}5n}{2}$.  Since any lossy execution
  can also be seen as a reconfigurable one (where the lossy broadcast
  corresponds to a reconfiguration which disconnects all the nodes),
  this also provides a similar asymptotic lower bound on the covering
  length for lossy networks.
\end{proof}

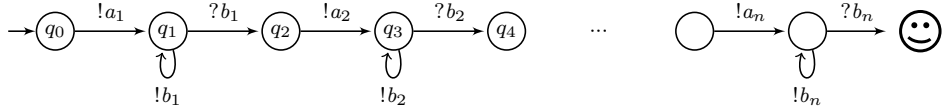
\begin{figure}[ht]
\begin{center}
\begin{tikzpicture}[shorten >=1pt,node distance=7mm and 1.5cm,on
  grid,auto,semithick]
      \everymath{\scriptstyle}
  \node[state,inner sep=1pt,minimum size=5mm] (q_0) {$q_0$};
  \path (q_0.-180) edge[latex'-] ++(180:4mm);
\node[state,inner sep=1pt,minimum size=5mm] (q_1) [right = of q_0] {$q_1$};
\node[state,inner sep=1pt,minimum size=5mm] (q_2) [right = of q_1] {$q_2$};
\node[state,inner sep=1pt,minimum size=5mm] (q_3) [right = of q_2] {$q_3$};
\node[state,inner sep=1pt,minimum size=5mm] (q_4) [right = of q_3] {$q_4$};
\node (dots) [right = of q_4,xshift=-.25cm] {$\cdots$};
\node[state,inner sep=1pt,minimum size=5mm] (q_6) [right = of q_4,xshift=1cm] {};
\node[state,inner sep=1pt,minimum size=5mm] (q_7) [right = of q_6] {};
\node (q_8) [right = of q_7] {$\Smiley[2]$};

 \path[-latex']
 (q_0) edge node {$\broadcast{\amess}_1$} (q_1)
(q_1) edge node {$\receive{\bmess}_1$} (q_2)
(q_2) edge node {$\broadcast{\amess}_2$} (q_3)
(q_3) edge node {$\receive{\bmess}_2$} (q_4)
(q_6) edge node {$\broadcast{\amess}_n$} (q_7)
(q_7) edge node {$\receive{\bmess}_n$} (q_8)
(q_1) edge[loop below] node {$\broadcast{\bmess}_1$} (q_1)
(q_3) edge[loop below] node {$\broadcast{\bmess}_2$} (q_3)
(q_7) edge[loop below] node {$\broadcast{\bmess}_n$} (q_7);

\end{tikzpicture}
\caption{Broadcast protocol with linear cutoff and quadratic covering length.}%
\label{fig:bp-Lbounds-reconfig}
\end{center}
\end{figure}

\section{Succinctness of reconfigurations compared to losses}%
\label{sec:succinctness}
Until now, reconfigurable and lossy semantics enjoy the same
properties: their coverability problem is decidable, and for positive
instances of the coverability problem, the same tight bounds on cutoff
and covering length hold. We now show that reconfigurable executions
can be linearly more succinct than lossy executions, in terms of
number of nodes. Given the tight linear bound on cutoff, this is
somehow optimal.

\begin{thm}%
  \label{th:succintness}
  There exists a family of broadcast protocols ${(\BP_n)}_{n}$ with
  $\BP_n =(\States_n,\initStates_n,\Mess_n,\Trans_n)$ and
  target states $\targetset_n \subseteq \States_n$ such that for every
  $n$:
  \begin{itemize}
  \item there exists a reconfigurable covering execution in
    $\BP_n$with $3$ nodes; and
  \item any lossy covering execution in $\BP_n$ requires $O(n)$ nodes.
  \end{itemize}
\end{thm}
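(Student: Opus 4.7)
I would construct $\BP_n$ with three interacting roles: a \emph{target} $T$ whose goal state is $\Smiley$, a \emph{chain-walker} $D$ that must be driven through $n$ intermediate states $d_0 \to d_1 \to \dots \to d_n$, and a \emph{helper} $H$ that supplies each advancing message but must be reset between broadcasts by a signal $r$ coming from $T$. The separation between the two semantics will rest on carefully chosen ``poison'' receptions: the transitions of $\BP_n$ are designed so that $D$ goes to $\bot$ whenever it receives $r$, and (possibly) $T$ and $H$ go to $\bot$ on wrong activations, so that the reset signal and the activation signals must be topologically isolated.

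\paragraph{Reconfigurable upper bound with three nodes.} I would exhibit an explicit covering execution on $\{T,D,H\}$ that cycles through two topologies per round. Round $i$ proceeds by first setting the topology to $\{H\sim D\}$ and letting $H$ broadcast the $i$-th activation (which pushes $D$ from $d_{i-1}$ to $d_i$), then switching the topology to $\{T\sim H\}$ and letting $T$ broadcast $r$ (which resets $H$ to ``ready for round $i+1$''); $D$ is disconnected, so it does not receive the poisonous~$r$. After $n$ such rounds, a final reconfiguration to $\{D\sim T\}$ lets $D$ broadcast the certificate message, and $T$ reaches $\Smiley$. Checking that each step is a legal reconfigurable transition is routine once the protocol has been defined.

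\paragraph{Lossy lower bound.} For the lossy lower bound I would argue that in any fixed topology $\Edges$ covering execution of $\BP_n$, two requirements interact badly: (i) the reset signal $r$ must be broadcast from $T$ and received by whichever node plays the helper role at the moment, yet (ii) no $D$-playing node can be a neighbour of $T$ when $r$ is broadcast, on pain of reaching~$\bot$. Under a small cutoff, the only way to satisfy both is to either lose $r$ (leaving the helper unreset, so the next round is blocked) or to introduce a fresh helper node that has not yet entered a $r$-sensitive state; a similar cascading argument would apply to the final certificate delivery, forcing fresh relays. Quantifying this over the $n$ rounds by an invariant maintained along the execution --- ``after the $i$-th successful advance of the chain-walker, at least $i$ pairwise distinct nodes have occupied a helper/relay role'' --- yields a minimum of $\Omega(n)$ nodes.

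\paragraph{Main obstacle.} The challenge is to make the lossy lower bound watertight against all topologies, initial labellings, and loss patterns. Message loss gives the lossy semantics significant freedom (in particular, a lost broadcast has no receiver at all, so ``poison'' can sometimes be avoided without disconnecting), so the protocol must be tuned so that the sender's own state progression after a broadcast is meaningful regardless of loss, while the information content of the broadcast is genuinely required to be received. I would therefore design $\BP_n$ so that, after each lost broadcast, some witness state of the whole network is provably unreachable, forcing the covering execution to only use non-lost broadcasts for the reset/activation sequence; at that point the topological contradiction identified above applies cleanly. I expect the inductive invariant on the number of successfully completed rounds to be the most delicate part of the proof.
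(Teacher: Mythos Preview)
Your plan is in the right spirit---use ``poison'' receptions to create a topological conflict under the static lossy semantics---but as written the lossy lower bound has a genuine gap. With only your primary poison $d_i \xrightarrow{??r} \bot$, the protocol can be covered by a \emph{four}-node lossy execution, independently of~$n$: take two nodes in the $T$-role, one $H$, one $D$, with edges $T_1\sim H$, $H\sim D$, $D\sim T_2$ and no others. Then $T_1$ resets $H$ by real broadcasts of $r$ (which never reach $D$, since $D\not\sim T_1$), $H$ advances $D$ by real broadcasts of the activation (which reach $T_1$ harmlessly, since you left $T$ unpoisoned by activations), and at the end $D$ sends the certificate to $T_2$, which has been idling in $t_0$ throughout and now reaches $\Smiley$. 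Your argument implicitly assumes a single $T$-node and then derives a conflict between ``$T$ must neighbour $H$'' and ``$T$ must not neighbour $D$'' and ``$T$ must neighbour $D$ for the certificate''; but nothing forbids role-splitting across nodes in a lossy execution. Your tentative ``(possibly) $T$ goes to $\bot$ on activations'' would close this particular hole (then $T_1$ dies on the first real activation and $H$ can never be reset), yet you do not commit to it, and the argument you actually sketch---the $r$-versus-$D$ conflict and the unexplained ``cascading'' certificate claim---does not use it.

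The paper's construction avoids this pitfall by attaching the poison to the \emph{advancing} messages themselves and aiming it at the state the broadcaster will subsequently need. Concretely, the chain-walker broadcasts $b_i$ (from $q_{2i-1}$) to advance the target along $r_1,\dots,\Smiley$, and the poison is $q_0 \xrightarrow{??b_i} \bot$; to continue toward $b_j$ with $j>i$, that same chain-walker must receive $a$, which can only be sent from $q_0$. The crux is entirely local to the broadcasting node: its own real broadcast of $b_i$ necessarily wipes out every $q_0$-labelled neighbour it has (they all receive $b_i$), so it can never again receive $a$. Hence no single node can perform two of the real $b_i$-broadcasts that reach the target node, and $n{+}1$ nodes are forced immediately. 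This argument is immune to role-splitting because it reasons about the broadcaster's own neighbourhood rather than about a global $T$/$H$/$D$ assignment. If you want to rescue your scheme, the fix is to make the helper's broadcast poisonous to \emph{whatever must later help the helper}; that self-defeating loop is exactly what the paper exploits.
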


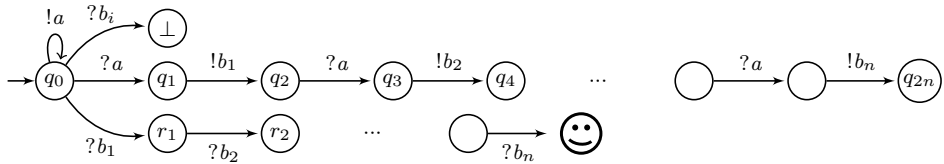
\begin{figure}[ht]
\begin{center}
\begin{tikzpicture}[shorten >=1pt,node distance=7mm and 1.5cm,on grid,auto,semithick]
    \everymath{\scriptstyle}
  \node[state,inner sep=1pt,minimum size=5mm] (q_0) {$q_0$};
  \path (q_0.-180) edge[latex'-] ++(180:4mm);
\node[state,inner sep=1pt,minimum size=5mm] (q_1) [right = of q_0] {$q_1$};
\node[state,inner sep=1pt,minimum size=5mm] (q_2) [right = of q_1] {$q_2$};
\node[state,inner sep=1pt,minimum size=5mm] (q_3) [right = of q_2] {$q_3$};
\node[state,inner sep=1pt,minimum size=5mm] (q_4) [right = of q_3] {$q_4$};
\node (dots) [right = of q_4,xshift=-.25cm] {$\cdots$};
\node[state,inner sep=1pt,minimum size=5mm] (q_6) [right = of q_4,xshift=1cm] {};
\node[state,inner sep=1pt,minimum size=5mm] (q_7) [right = of q_6] {};
\node [state,inner sep=1pt,minimum size=5mm] (q_8) [right = of q_7] {$q_{2n}$};

\node[state,inner sep=1pt,minimum size=5mm] (r_1) [below = of q_1] {$r_1$};
\node[state,inner sep=1pt,minimum size=5mm] (bot) [above = of q_1] {$\bot$};
\node[state,inner sep=1pt,minimum size=5mm] (r_2) [right = of r_1] {$r_2$};
\node (dots) [right = of r_2,xshift=-.25cm] {$\cdots$};
\node[state,inner sep=1pt,minimum size=5mm] (r_3) [right = of r_2,xshift=1cm] {};
\node(r_4) [right = of r_3] {$\Smiley[2]$};

 \path[-latex']
(q_0) edge [loop above]	node [above ] {$\broadcast{\amess}$} (q_0)
 (q_0) edge node {$\receive{\amess}$} (q_1)
(q_1) edge node {$\broadcast{\bmess}_1$} (q_2)
(q_2) edge node {$\receive{\amess}$} (q_3)
(q_3) edge node {$\broadcast{\bmess}_2$} (q_4)
(q_6) edge node {$\receive{\amess}$} (q_7)
(q_7) edge node {$\broadcast{\bmess}_n$} (q_8)

 (q_0) edge[bend right] node [below] {$\receive{\bmess}_1$} (r_1)
(r_1) edge node[below] {$\receive{\bmess}_2$} (r_2)
(r_3) edge node [below] {$\receive{\bmess}_n$} (r_4)

 (q_0) edge[bend left] node [above] {$\receive{\bmess}_
i$} (bot)
;

\end{tikzpicture}
\caption{Example where reconfigurable semantics  needs less nodes than
  lossy semantics.}\label{fig:mobileVSlosses}
  \end{center}
\end{figure}

\begin{proof}
  $\BP_n$ is depicted in Figure~\ref{fig:mobileVSlosses}. It has
  $3n{+}2$ states and we let $F_n = \{\Smiley\}$.  A covering
  reconfigurable execution of size 3 is given in
  Figure~\ref{fig:mobile_exec}. Colored nodes broadcast a message in
  the step leading to the next configuration. Along that execution,
  the top node always remains at $\state_0$ and alternatively
  broadcasts $\amess$ to the middle node and disconnects; the middle
  node follows the chain of $\state_i$ states and alternatively
  broadcasts $\bmess_i$'s to the bottom node which gradually
  progresses along the chain of states $r_i$ and reaches $\Smiley$.

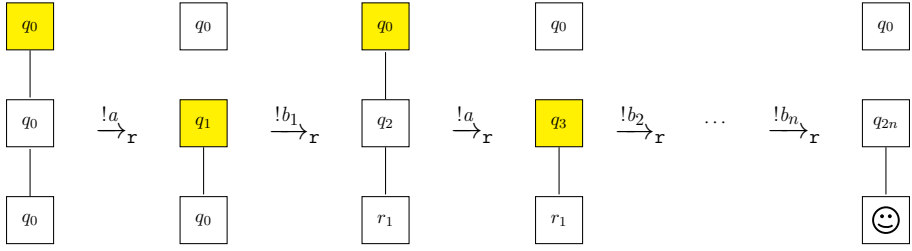
\begin{figure}[ht]
\begin{center}
\scalebox{.65}{
\begin{tikzpicture}[>=stealth',shorten >=1pt,auto,node distance=1.5cm,
                    semithick]
  \tikzstyle{every state}=[fill=none,draw=black,text=black,inner sep = 0pt,rectangle]

  \node[state,fill=yellow] (A1)                   {$q_0$};
  \node[state]         (B1) [below = 1cm of A1] 	  {$q_0$};
  \node[state]         (C1) [below = 1cm of B1] 	  {$q_0$};

\node (x1) [right = .7cm of B1]  {\LARGE $\xrightarrow{\broadcast{\amess}}_\reconfig$};

\node[state, fill=yellow] (A2)   [right = .7cm of x1]                {$q_1$};
  \node[state]         (B2) [below = 1cm of A2] 	  {$q_0$};
  \node[state]         (C2) [above = 1cm of A2] 	  {$q_0$};

\node (x2) [right = .7cm of A2]  {\LARGE $\xrightarrow{\broadcast{\bmess}_1}_\reconfig$};

\node[state] (A3)   [right = .7cm of x2]                {$q_2$};
  \node[state]         (B3) [below = 1cm of A3] 	  {$r_1$};
  \node[state,fill=yellow]         (C3) [above = 1cm of A3] 	  {$q_0$};

\node(x3) [right = .7cm of A3]  {\LARGE $\xrightarrow{\broadcast{\amess}}_\reconfig$};

\node[state,fill=yellow] (A4)   [right = .7cm of x3]                {$q_3$};
  \node[state]         (B4) [below = 1cm of A4] 	  {$r_1$};
  \node[state]         (C4) [above = 1cm of A4] 	  {$q_0$};

\node(x4) [right = .5cm of A4]  {\LARGE $\xrightarrow{\broadcast{\bmess}_2}_\reconfig$};
\node(x5) [right = .5cm of x4]  {$\cdots$};
\node(x6) [right = .5cm of x5]  {\LARGE $\xrightarrow{\broadcast{\bmess}_n}_\reconfig$};

\node[state] (A5)   [right = .7cm of x6]                {$q_{2n}$};
  \node[state]         (B5) [below = 1cm of A5] 	  {$\Smiley[2]$};
  \node[state]         (C5) [above = 1cm of A5] 	  {$q_0$};

 \path (A1) edge  	 (B1)
 (C1) edge  	 (B1)
 (A2) edge  	 (B2)
 (A3) edge  	 (B3)
 (A3) edge  	 (C3)
 (A4) edge  	 (B4)
 (A5) edge  	 (B5)
;

\end{tikzpicture}
}
\caption{Covering reconfigurable execution with 3 nodes on the protocol from Figure~\ref{fig:mobileVSlosses}.}\label{fig:mobile_exec}

\end{center}
\end{figure}

  Let us argue that in the lossy semantics, $O(n)$ nodes are needed to
  cover $\Smiley$. Obviously, one node, say $\node_{\Smiley}$, is
  needed to reach the target state, after having received sequentially
  all the $\bmess_i$'s (which should then correspond to real
  broadcasts). Towards a contradiction, assume there is a node $\node$
  which makes $\node_{\Smiley}$ progress twice, that is, $\node$ is
  connected to $\node_{\Smiley}$ and performs at least two real
  broadcasts, say $\broadcast{\bmess}_i$ and $\broadcast{\bmess}_j$
  with $i<j$. Node $\node$ needs to receive $j-i>0$ times the message
  $\amess$ after the real $\broadcast{\bmess}_i$ has occurred, hence
  there must be at least one node in state $q_0$ connected to $\node$
  after the real $\broadcast{\bmess}_i$ by $\node$. This is not
  possible, since this node has received the real
  $\broadcast{\bmess}_i$ while being in $\state_0$, leading to $\bot$
  if $i > 1$, otherwise $\bot$ or $r_1$.  Hence, each broadcast
  $\broadcast{\bmess}_i$ needs to be sent by a different node. This
  requires at least $n{+}1$ nodes, say $\{\node_i \mid 1 \le i \le n\}
  \cup \{\node_{\Smiley}\}$: node $\node_i$ is responsible for
  broadcasting (with no loss) $\bmess_i$ and $\node_{\Smiley}$
  progresses towards $\Smiley$.  Notice that $\node_{\Smiley}$ might
  be the node responsible for broadcasting all the $\amess$'s. We
  conclude that $n{+}1$ is a lower bound on the number of nodes needed
  to cover $\Smiley$ in the lossy semantics.

  To complete this example, observe that $n{+}1$ nodes do actually
  suffice in lossy semantics to cover $\Smiley$.  Let
  $\Nodes = \{\node_i \mid 1 \le i \le n\} \cup \{\node_{\Smiley}\}$
  and consider the static communication topology defined by
  $\node_i \sim \node_{\Smiley}$ for every $i$. In the covering lossy
  execution, node $\node_{\Smiley}$ initially broadcasts $\amess$'s,
  so that all its neighbours, the $\node_i$'s can move to $q_{2i-1}$,
  using lost sendings. Then each node $\node_i$ broadcasts its
  message $\bmess_i$ to $\node_{\Smiley}$, starting with $\node_1$
  until $\node_{n}$, so that $\node_{\Smiley}$ reaches $\Smiley$.

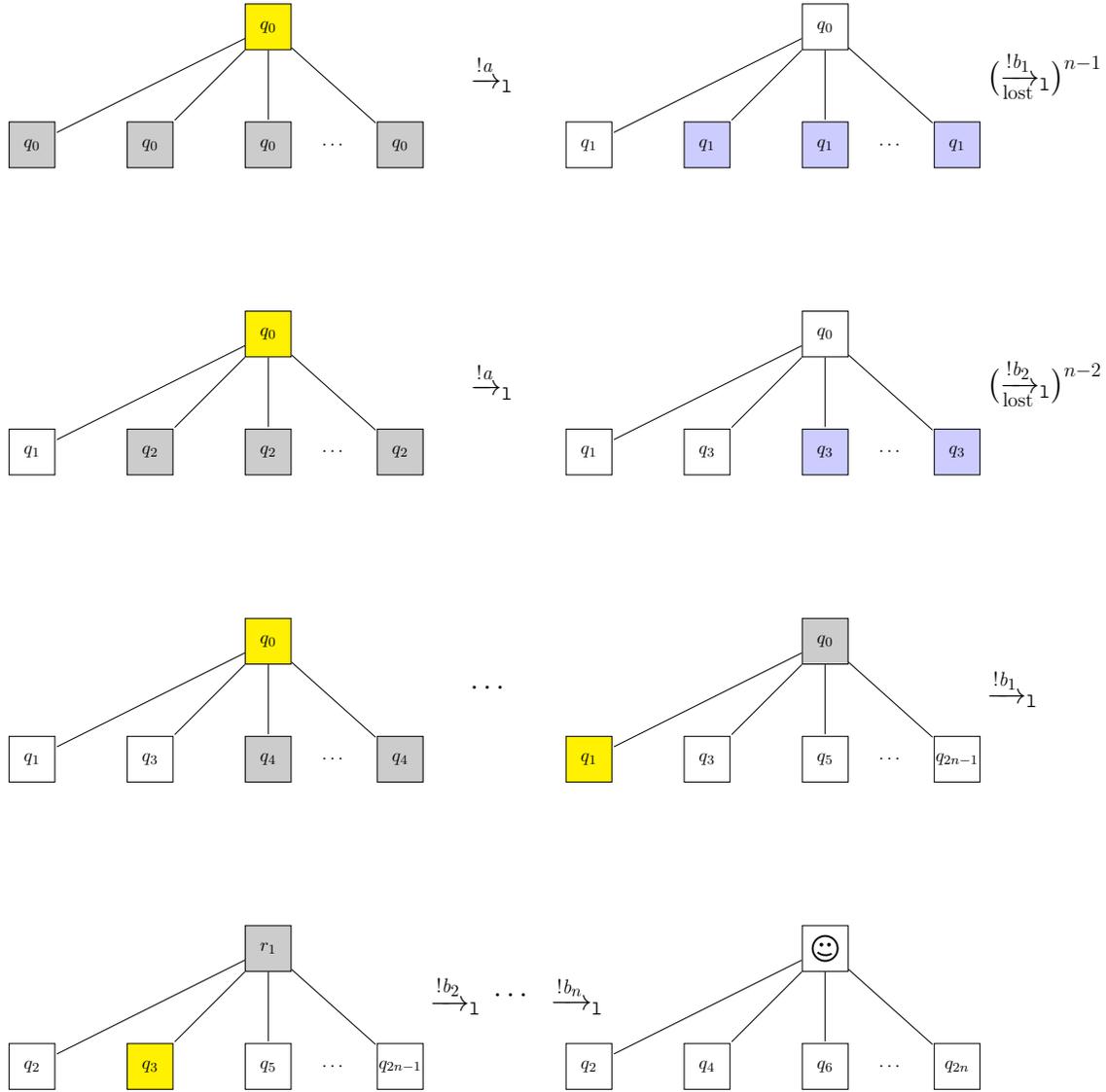
\begin{figure}[htbp]\centering
\scalebox{.65}{
\begin{tikzpicture}[>=stealth',shorten >=1pt,auto,node distance=3.5cm,
                    semithick]
  \tikzstyle{every state}=[fill=none,draw=black,text=black,inner sep = 0pt,rectangle]

  \node[state,draw = none] (A1)                   {};
  \node[state,draw = none]         (A2) [right = 1.5cm of A1] 	  {};
  \node[state, fill = yellow]         (A3) [right = 1.5cm of A2] 	  {$q_0$};
\node (dots1)   [right = .5cm of A3] {};
  \node[state, draw = none]         (An) [right = .5cm of dots1] 	  {};
  \node[state, fill = gray!40]         (B1) [below = 1.5cm of A1] 	  {$q_0$};
  \node[state, fill = gray!40]         (B2) [right = 1.5cm of B1] 	  {$q_0$};
  \node[state, fill = gray!40]         (B3) [right = 1.5cm of B2] 	  {$q_0$};
\node (dots2)   [right = .5cm of B3] {$\cdots$};
  \node[state, fill = gray!40]         (Bn) [right = .5cm of dots2] 	  {$q_0$};
  \node[]         (X1) [right = .85cm of Bn,yshift=1.5cm] 	  {\LARGE $\xrightarrow{\broadcast{\amess}}_\lossy$};

  \node[state,draw = none] (a1)    [right = of An]               {};
  \node[state,draw = none]         (a2) [right = 1.5cm of a1] 	  {};
  \node[state]         (a3) [right = 1.5cm of a2] 	  {$q_0$};
\node (dots3)   [right = .5cm of a3] {};
  \node[state,draw = none]         (an) [right = .5cm of dots3] 	  {};
  \node[state]         (b1) [below = 1.5cm of a1] 	  {$q_1$};
  \node[state,fill = blue!20]         (b2) [right = 1.5cm of b1] 	  {$q_1$};
  \node[state,fill = blue!20]         (b3) [right = 1.5cm of b2] 	  {$q_1$};
\node (dots4)   [right = .5cm of b3] {$\cdots$};
  \node[state,fill = blue!20]         (bn) [right = .5cm of dots4] 	  {$q_1$};
  \node        (X2) [right = 0cm of bn,yshift=1.5cm] 	  {\LARGE $\bigl(\xrightarrow{\broadcast{\bmess}_1}_\lossy\bigr)^{n-1}$};
  \node (Y2) [below left= -.5cm and -1.25cm of X2] {\large lost};

  \node[state,draw = none] (D1)     [below = 3cm of B1]              {};
  \node[state,draw = none]         (D2) [right = 1.5cm of D1] 	  {};
  \node[state, fill = yellow]         (D3) [right = 1.5cm of D2] 	  {$q_0$};
\node (dots5)   [right = .5cm of D3] {};
  \node[state,draw = none]         (Dn) [right = .5cm of dots5] 	  {};
  \node[state]         (E1) [below = 1.5cm of D1] 	  {$q_1$};
  \node[state, fill = gray!40]         (E2) [right = 1.5cm of E1] 	  {$q_2$};
  \node[state, fill = gray!40]         (E3) [right = 1.5cm of E2] 	  {$q_2$};
\node (dots6)   [right = .5cm of E3] {$\cdots$};
  \node[state, fill = gray!40]         (En) [right = .5cm of dots6] 	  {$q_2$};
  \node[]         (X3) [right = .85cm of En,yshift=1.5cm] 	  {\LARGE $\xrightarrow{\broadcast{\amess}}_\lossy$};

\node[state,draw = none] (d1)     [below = 3cm of b1]              {};
  \node[state,draw = none]         (d2) [right = 1.5cm of d1] 	  {};
  \node[state]         (d3) [right = 1.5cm of d2] 	  {$q_0$};
\node (dots7)   [right = .5cm of d3] {};
  \node[state,draw = none]         (dn) [right = .5cm of dots7] 	  {};
  \node[state]         (e1) [below = 1.5cm of d1] 	  {$q_1$};
  \node[state]         (e2) [right = 1.5cm of e1] 	  {$q_3$};
  \node[state, fill = blue!20]         (e3) [right = 1.5cm of e2] 	  {$q_3$};
\node (dots8)   [right = .5cm of e3] {$\cdots$};
  \node[state, fill = blue!20]         (en) [right = .5cm of dots8] 	  {$q_3$};
  \node        (X4) [right = 0cm of en,yshift=1.5cm] 	  {\LARGE $\bigl(\xrightarrow{\broadcast{\bmess}_2}_\lossy\bigr)^{n-2}$};
  \node (Y4) [below left= -.5cm and -1.25cm of X4] {\large lost};

\node[state,draw = none] (G1)     [below = 3cm of E1]              {};
  \node[state,draw = none, dotted]         (G2) [right = 1.5cm of G1] 	  {};
  \node[state, fill = yellow]         (G3) [right = 1.5cm of G2] 	  {$q_0$};
\node (dots9)   [right = .5cm of G3] {};
  \node[state,draw = none]         (Gn) [right = .5cm of dots9] 	  {};
  \node[state]         (H1) [below = 1.5cm of G1] 	  {$q_1$};
  \node[state]         (H2) [right = 1.5cm of H1] 	  {$q_3$};
  \node[state, fill = gray!40]         (H3) [right = 1.5cm of H2] 	  {$q_4$};
\node (dots10)   [right = .5cm of H3] {$\cdots$};
  \node[state, fill = gray!40]         (Hn) [right = .5cm of dots10] 	  {$q_4$};
  \node[]         (X5) [right = .85cm of Hn,yshift=1.5cm]
  {\LARGE $\cdots$};

  \node[state,draw = none] (g1)    [right = of Gn]               {};
  \node[state,draw = none]         (g2) [right = 1.5cm of g1] 	  {};
  \node[state, fill = gray!40]         (g3) [right = 1.5cm of g2] 	  {$q_0$};
\node (dots11)   [right = .5cm of g3] {};
  \node[state,draw = none,dotted]         (gn) [right = .5cm of dots11] 	  {};
  \node[state, fill = yellow]         (h1) [below = 1.5cm of g1] 	  {$q_1$};
  \node[state]         (h2) [right = 1.5cm of h1] 	  {$q_3$};
  \node[state]         (h3) [right = 1.5cm of h2] 	  {$q_5$};
\node (dots12)   [right = .5cm of h3] {$\cdots$};
  \node[state]         (hn) [right = .5cm of dots12] 	  {$q_{2n-1}$};
  \node        (X6) [right = 0cm of hn,yshift=1.5cm] 	  {\LARGE $\xrightarrow{\broadcast{\bmess}_1}_\lossy$};

  \node[state,draw = none] (J1)     [below = 3cm of H1]              {};
  \node[state,draw = none]         (J2) [right = 1.5cm of J1] 	  {};
  \node[state,fill = gray!40]         (J3) [right = 1.5cm of J2] 	  {$r_1$};
\node (dots13)   [right = .5cm of J3] {};
  \node[state,draw = none]         (Jn) [right = .5cm of dots13] 	  {};
  \node[state]         (K1) [below = 1.5cm of J1] 	  {$q_2$};
  \node[state, fill = yellow]         (K2) [right = 1.5cm of K1] 	  {$q_3$};
  \node[state]         (K3) [right = 1.5cm of K2] 	  {$q_5$};
\node (dots14)   [right = .5cm of K3] {$\cdots$};
  \node[state]         (Kn) [right = .5cm of dots14] 	  {$q_{2n-1}$};
  \node[] (X7) [right = 0cm of Kn,yshift=1.5cm] {\LARGE
    $\xrightarrow{\broadcast{\bmess}_2}_\lossy$};
  \node[] (X7dots) [right=0cm of X7] {\LARGE $\cdots$};
    \node[] (X7n) [right = 1.2cm of X7] {\LARGE
    $\xrightarrow{\broadcast{\bmess}_n}_\lossy$};

\node[state,draw = none] (j1)     [below = 3cm of h1]              {};
  \node[state,draw = none]         (j2) [right = 1.5cm of j1] 	  {};
  \node[state]         (j3) [right = 1.5cm of j2] 	  {$\Smiley[2]$};
\node (dots15)   [right = .5cm of d3] {};
  \node[state,draw = none]         (jn) [right = .5cm of dots15] 	  {};
  \node[state]         (k1) [below = 1.5cm of j1] 	  {$q_2$};
  \node[state]         (k2) [right = 1.5cm of k1] 	  {$q_4$};
  \node[state]         (k3) [right = 1.5cm of k2] 	  {$q_6$};
\node (dots16)   [right = .5cm of k3] {$\cdots$};
  \node[state]         (kn) [right = .5cm of dots16] 	  {$q_{2n}$};

 \path [line] (A3) --  	 (B1);
 \path [line] (A3) --  	 (B2);
 \path [line] (A3) --  	 (B3);
 \path [line] (A3) --  	 (Bn);


\path [line](a3) --  	 (b2);
\path [line](a3) --  	 (b3);
\path [line](a3) --  	 (b1);
\path [line](a3) --  	 (bn);


\path [line](D3) --  	 (E2);
\path [line](D3) --  	 (E3);
\path [line](D3) --  	 (En);
\path [line](D3) --  	 (E1);


\path [line](d3) --  	 (e2);
\path [line](d3) --  	 (e3);
\path [line](d3) --  	 (en);
\path [line](d3) --  	 (e1);


\path [line](G3) --  	 (H3);
\path [line](G3) --  	 (Hn);
\path [line](G3) --  	 (H1);
\path [line](G3) --  	 (H2);


\path [line](g3) --  	 (h3);
\path [line](g3) --  	 (h1);
\path [line](g3) --  	 (h2);
\path [line](g3) --  	 (hn);


\path [line](J3) --  	 (K3);
\path [line](J3) --  	 (K1);
\path [line](J3) --  	 (K2);
\path [line](J3) --  	 (Kn);


\path [line](j3) --  	 (k2);
\path [line](j3) --  	 (k3);
\path [line](j3) --  	 (kn);
\path [line](j3) --  	 (k1);

\end{tikzpicture}
}
\caption{A covering lossy execution on the protocol from Figure~\ref{fig:mobileVSlosses}.}
\label{fig:lossy_exec_succinctness}
\end{figure}

Figure~\ref{fig:lossy_exec_succinctness} depicts the above-decribed
lossy execution with $n{+}1$ nodes. In this picture yellow nodes
perform real broadcasts, light blue nodes perform lossy broadcasts and
gray nodes change state upon reception of a message in the next transition.
\end{proof}

\section{Complexity of deciding the size of minimal witnesses}\label{sec:complexity}

We now consider the following decision problem of determining the
minimal size of coverability witnesses for both the reconfigurable and
lossy semantics.

 \begin{fmpage}{0.95\linewidth}
   \textsc{Minimum number of nodes for
     coverability (MinCover)} \\
   {\bf Input}: A broadcast protocol $\BP$, a set of  states $\targetset \subseteq \States$, and  $k \in \nats$.\\
   {\bf Question}: Does there exist a reconfigurable/lossy execution
   $\rho$ covering some state in $\targetset$, and with
   $\execsize{\exec} = k$?
 \end{fmpage}
 By the copycat properties (for both semantics), if there is a covering
 execution of size less than $k$, then there is one of size exactly
 $k$.

\begin{restatable}{thm}{mincover}%
  \label{th:mincover}
  {\normalfont\textsc{MinCover}} is \NP-complete for both
  reconfigurable and lossy broadcast networks.
\end{restatable}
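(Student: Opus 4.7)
The plan is to establish NP-completeness in two standard steps: NP membership via small witnesses, followed by NP-hardness via a polynomial-time reduction from 3SAT.

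For NP membership I rely on the results already proved. By the two copycat properties, a covering execution of some size less than $k$ can always be extended to one of size exactly $k$; hence the question ``is there a covering execution with $\execsize{\exec}=k$?'' is equivalent to ``is the cutoff at most $k$?''. By Theorems~\ref{th:Ubounds-reconfig} and~\ref{th:Ubounds-lossy}, whenever the target is coverable, the cutoff is at most $2|\States|$ and a witness execution of length at most $2|\States|^2$ exists. The NP algorithm therefore guesses a covering execution of size $\min(k,2|\States|)$ and length at most $2|\States|^2$ (together with the chosen broadcasting node and, in the reconfigurable case, the topology at each step), and verifies in polynomial time that the execution is valid and covers $\targetset$.

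For NP-hardness I reduce from 3SAT. Given a formula $\phi=C_1\wedge\cdots\wedge C_m$ over variables $x_1,\dots,x_n$, I build a protocol $\BP_\phi$ featuring three roles. A distinguished \emph{leader} has a single linear path that broadcasts the sequence $(x_1,b_1),\dots,(x_n,b_n)$, each choice $b_j\in\{\text{T},\text{F}\}$ being encoded in the leader's next state so that contradictory broadcasts for the same variable are impossible. For each clause $C_i$ a \emph{verifier} moves to a satisfied state only upon hearing the broadcast of some true literal of $C_i$, and then broadcasts an acknowledgement $\mathsf{ack}_i$. A \emph{collector} walks through a chain by receiving the $\mathsf{ack}_i$'s in prescribed order until it reaches the target state. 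Setting $k := m+2$ (leader, $m$ verifiers, collector), the intended witness assigns exactly one node per role.

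The soundness direction is easy: a satisfying assignment induces a covering execution of size $k$ by making the leader broadcast that assignment in order, the verifiers catch their true literal and acknowledge, and the collector reaches the target. The main obstacle is completeness: from a covering execution of size $k$ one must extract a satisfying assignment, despite the freedom granted by reconfiguration and lossy sends---for instance, having several leader-like nodes broadcasting inconsistent partial assignments, or reusing a single verifier for several clauses. To rule this out, I would enrich the protocol so that every verifier binds irrevocably to a single leader via a unique session token broadcast at startup, and so that the $\mathsf{ack}_i$'s carry the clause index, forcing the collector to consume $m$ distinct acknowledgements. With $k=m+2$ nodes the counting then leaves room for exactly one leader, $m$ verifiers (one per clause) and one collector, which in turn forces a single consistent assignment that satisfies all clauses. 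Since the construction uses only broadcasts and receptions and no topological trickery, the same protocol witnesses NP-hardness simultaneously for the reconfigurable and the lossy semantics.
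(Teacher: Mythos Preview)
Your \NP-membership argument is essentially the paper's: polynomial bounds on size and length from Theorems~\ref{th:Ubounds-reconfig} and~\ref{th:Ubounds-lossy}, then guess-and-check.

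For \NP-hardness you take a genuinely different route. The paper reduces from \textsc{SetCover}: each set $S_j$ becomes an initial state with self-loop broadcasts $\broadcast{\amess}$ for every $\amess\in S_j$, and a single chain $q_1\xrightarrow{\receive{\amess}_1}q_2\cdots\xrightarrow{\receive{\amess}_n}\Smiley$ must receive all universe elements; the bound is $k'=k{+}1$. Completeness is then a one-line counting argument: one node must sit on the chain, the remaining $k$ nodes occupy at most $k$ of the $s_j$'s, and those sets must cover $\mathcal{U}$. There is no consistency issue to manage because the ``senders'' are stateless self-loops.

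Your 3SAT reduction can be made to work, but the mechanism you propose to close the completeness gap is not available in this model. Broadcast networks are anonymous and the message alphabet is fixed and finite; there is no way to generate a \emph{unique} session token at runtime, so ``every verifier binds irrevocably to a single leader via a unique session token'' cannot be implemented as stated. Fortunately you do not need it. The counting argument you already sketch is sufficient on its own: with $k=m{+}2$ nodes, one node must be the collector, each $\mathsf{ack}_i$ requires a dedicated verifier-$i$ node (design the protocol so a verifier-$i$ can emit only $\mathsf{ack}_i$, and only once), which consumes $m{+}1$ nodes; the last node must then be a leader, since otherwise no literal is ever broadcast and no verifier can fire. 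A single leader on a linear path yields a single consistent (partial) assignment, and every verifier that fires was satisfied by that assignment, hence $\phi$ is satisfiable. So drop the tokens and keep the counting.

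In short: your hardness proof is salvageable and conceptually fine, but more intricate than necessary; the paper's \textsc{SetCover} reduction avoids the leader/consistency issue entirely and yields a two-line completeness argument.
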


The \NP-hardness of \textsc{MinCover} is proved by reduction from
\textsc{SetCover}, which is known to be
\NP-complete~\cite{karp72}. Recall that \textsc{SetCover} takes as
input a finite set of elements $\mathcal{U}$, a collection
$\mathcal{S}$ of subsets of $\mathcal{U}$ and an integer $k$, and
returns yes iff there exists a subcollection of $\mathcal{S}$ of size
at most $k$ that covers $\mathcal{U}$.

\begin{lem}
  {\normalfont\textsc{SetCover}} reduces to {\normalfont\textsc{MinCover}} in logarithmic space.
  \end{lem}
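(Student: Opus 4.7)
The plan is a direct reduction from \textsc{SetCover}. Given $(\mathcal{U},\mathcal{S},k)$ with $\mathcal{U}=\{u_1,\dots,u_n\}$, I build a broadcast protocol $\BP$ whose control states are $\{q_S\mid S\in\mathcal{S}\}\cup\{p_0,p_1,\dots,p_n\}$, with initial states $\{q_S\mid S\in\mathcal{S}\}\cup\{p_0\}$ and message alphabet $\{a_u\mid u\in\mathcal{U}\}$. For every $S\in\mathcal{S}$ and every $u\in S$, I add a self-loop broadcast $q_S\xrightarrow{\broadcast{a_u}}q_S$; on the collector side I add $p_i\xrightarrow{\receive{a_{u_{i+1}}}}p_{i+1}$ for $0\le i<n$, with all other receptions being the implicit self-loops as per the paper's convention. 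The reduction outputs $(\BP,\targetset,k+1)$ with $\targetset=\{p_n\}$.

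The intuition is that a node starting in $q_S$ is a \emph{chooser} permanently committed to $S$ and only ever able to broadcast elements of $S$, whereas a node starting in $p_0$ is a \emph{collector} whose state $p_i$ records how many prefix elements $u_1,\dots,u_i$ have been received in order. For completeness, given a cover $\{S_1,\dots,S_{k'}\}$ with $k'\le k$, I would start with one collector and $k'$ choosers committed to $S_1,\dots,S_{k'}$ and, for each $u_i$ in turn, fire a broadcast $\broadcast{a_{u_i}}$ from a chooser whose set contains $u_i$; the copycat properties (Propositions~\ref{prop:copycat-reconfig_reconfig} and~\ref{prop:copycat-reconfig_lossy}) then pad the execution up to exactly $k+1$ nodes. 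For soundness, note that no transition ever leaves a $q_S$ state and no transition enters any $p_i$ from a non-$p$ state; so any covering execution must start with at least one node at $p_0$, has therefore at most $k$ choosers, and, since the collector only reaches $p_n$ by receiving real broadcasts of $a_{u_1},\dots,a_{u_n}$ (which only choosers can emit), the committed sets of those at most $k$ choosers form a cover of $\mathcal{U}$.

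The construction is producible in logarithmic space: states, messages and transitions are enumerated by counters ranging over $\mathcal{S}$ and $\mathcal{U}$, each of logarithmic bit-length. The subtlety I would flag as the main obstacle is to make one reduction witness hardness uniformly for both semantics. In the reconfigurable case this is straightforward, since the topology can be reset before each broadcast so that only the collector is a neighbour; in the lossy case I would fix the static star topology with the collector at the centre, so that each useful broadcast reaches it while any spurious ones can be declared lost. Since every lossy execution is also a reconfigurable one, the same instance witnesses \NP-hardness for both variants of \textsc{MinCover}.
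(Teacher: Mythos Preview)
Your reduction is correct and is essentially the same as the paper's: the paper builds exactly this protocol (one self-looping ``chooser'' state per set in~$\mathcal{S}$, plus a chain $q_1\to\cdots\to q_n\to\Smiley$ of reception states indexed by the elements of~$\mathcal{U}$), with target $\{\Smiley\}$ and bound $k'=k+1$, and argues completeness and soundness along the same lines. Your explicit use of the copycat properties to pad a cover of size ${\le}k$ up to exactly $k{+}1$ nodes, and your remark that a single static star topology works for both semantics, are minor elaborations that the paper leaves implicit.
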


  \begin{proof}
  Given an instance of the \textsc{SetCover} problem
  $(\mathcal{U, S},k)$, let us explain how to construct in logarithmic
  space, a protocol $\BP$ with a set of target states $\targetset$ and
  some integer $k'$ such that $(\mathcal{U, S},k)$ is a positive
  instance of \textsc{SetCover} if and only if $(\BP,\targetset,k'$) is
  a positive instance of \textsc{MinCover}.

  For $\mathcal{U} = \{a_1, a_2, \ldots, a_n\}$ and
  $\mathcal{S} = \{S_1, S_2, \ldots, S_m\}$, we define the protocol
  $\BP =(\States,\initStates,\Mess,\Trans)$ (depicted in
  Figure~\ref{fig:NP-hardness}) as follows:
  \begin{itemize}
  \item $\States = \{s_1, s_2, \ldots, s_m\} \uplus \{q_1, q_2,
    \ldots, q_n\} \uplus \{\Smiley\}$;
  \item $\initStates = \{s_1, s_2, \ldots, s_m\} \cup \{q_1\}$;
  \item $\Mess = \mathcal{U}$;
  \item $\Trans = \{(s_j, \broadcast{\amess}, s_j) \ | \ a\in S_j, 1
    \leq j \leq m\} \cup \{(q_i, \receive{\amess}_i, q_{i+1}) \ | \ 1
    \leq i < n\} \cup \{(q_n, \receive{\amess}_n, \Smiley)\}$;
  \end{itemize}
    We further let $\targetset = \{\Smiley\}$, and $k' = k+1$. Clearly
  this reduction can be done in logarithmic space. It remains to show
  that $\mathcal{U, S}$ has a cover of size $k$ if and only if there
  exists a reconfigurable/lossy execution for $\BP$ covering
  $\targetset$ and with $k'$ nodes.

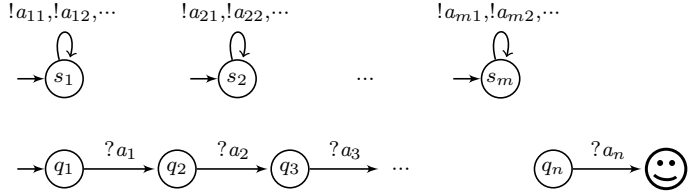
\begin{figure}[ht]
  \begin{center}
\begin{tikzpicture}[shorten >=1pt,node distance=7mm and 1.5cm,on grid,auto,semithick]
    \everymath{\scriptstyle}
  \node[state,inner sep=1pt,minimum size=5mm] (s_1) {$s_1$};
  \path (s_1.-180) edge[latex'-] ++(180:4mm);
    \node[state,inner sep=1pt,minimum size=5mm] (s_2) [right = of s_1,xshift=.8cm] {$s_2$};
  \path (s_2.-180) edge[latex'-] ++(180:4mm);
  \node (point) [right = of s_2,xshift=.2cm] {$\ldots$};
  \node[state,inner sep=1pt,minimum size=5mm] (s_m) [right = of s_2,xshift=2cm] {$s_m$};
  \path (s_m.-180) edge[latex'-] ++(180:4mm);

  \node[state,inner sep=1pt,minimum size=5mm] (q_1) [below = of s_1,yshift=-.5cm] {$q_1$};
  \path (q_1.-180) edge[latex'-] ++(180:4mm);
  \node[state,inner sep=1pt,minimum size=5mm] (q_2) [right = of q_1] {$q_2$};
  \node[state,inner sep=1pt,minimum size=5mm] (q_3) [right = of q_2] {$q_3$};
    \node (q_4) [right = of q_3] {$\cdots$};
  \node[state,inner sep=1pt,minimum size=5mm] (q_n) [right = of q_3,xshift=2cm] {$q_n$};
    \node[inner sep=0pt] (smi) [right = of q_n] {$\Smiley[2]$};

 \path[-latex']
 (s_1) edge [loop above] node [above ]
 {$\broadcast{\amess}_{11},\broadcast{\amess}_{12},\cdots$} (s_1)
  (s_2) edge [loop above] node [above ]
  {$\broadcast{\amess}_{21},\broadcast{\amess}_{22},\cdots$} (s_2)
   (s_m) edge [loop above] node [above ]
 {$\broadcast{\amess}_{m1},\broadcast{\amess}_{m2},\cdots$} (s_m)
(q_1) edge node {$\receive{\amess}_1$} (q_2)
(q_2) edge node {$\receive{\amess}_2$} (q_3)
(q_3) edge node {$\receive{\amess}_3$} (q_4)
(q_n) edge node {$\receive{\amess}_n$} (smi)
;
\end{tikzpicture}
    \end{center}
\caption{Illustration of the reduction to prove \NP-hardness of \textsc{MinCover}.}%
\label{fig:NP-hardness}
\end{figure}

  Suppose the \textsc{SetCover} instance is positive, and
  $\mathcal{C} = \{S'_1,\dots,S'_k\}$ is a cover. Let us build a
  \emph{static} execution $\exec$ (with no message losses) that covers
  $\targetset$ in $\BP$. Of course, $\exec$ is also a
  reconfigurable/lossy execution. Choose
  $\config_0 = (\Nodes,\Edges,\labelf_0)$ where
  $\Nodes = \{\node_1, \ldots, \node_k\} \cup \{\node\}$,
  $\labelf_0(\node) = q_1$, $\labelf_0(\node_i) = s_i'$ for all
  $1 \leq i \leq k$ (assuming $s'_i = s_j$ if $S'_i = S_j$). Since
  $\mathcal{C}$ covers each $a_i \in \mathcal{U}$, there exists $j$
  such that $a_i \in S_j'$. Thus the broadcast transition
  $(s_j', \broadcast{\amess}_i, s_j')$ will be enabled, and some node
  may fire the corresponding reception transition
  $(q_i, \receive{\amess}_i, q_{i+1})$ to happen. As for the static
  communication topology, it is sufficient to assume that
  $\node_j \sim \node$ for all $1 \leq j \leq k$. Messages $\amess_i$
  are being broadcast one after the one, so that we reach a
  configuration $\config = (\Nodes,\Edges,\labelf)$ such that
  $\labelf(\node) = \Smiley$.

  Assume now, that the \textsc{SetCover} instance is negative, thus
  there is no cover of size $k$. For a contradiction, assume there
  exists an initial configuration
  $\config_0 = (\Nodes,\Edges_0,\labelf_0)$ of size $k{+}1$ and a
  reconfigurable (resp.\ lossy) execution $\exec$ from $\config_0$ that
  covers $\Smiley$. A necessary condition is that
  $q_1 \in \labelf_0(\config_0)$ and a single such node is
  sufficient, so we let $\Nodes = \{\node_1,\dots,\node_k, \node\}$
  with $\labelf_0(\node) = q_1$, $\labelf_0(\node_j) = s_j'$ for some
  $s_j'$ where $1 \leq j \leq k$. Since the instance is negative,
  there exists $a_i \in \mathcal{U}$ such that
  $a_i \notin \bigcup_{1 \leq j \leq k} S_j'$. Therefore, none of the
  nodes will be able to broadcast the message $\amess_i$ and the
  corresponding reception $(q_i, \receive{\amess}_i, q_{i+1})$ will
  never be performed. This contradicts the fact that $\rho$ covers
  $\Smiley$.
%
\end{proof}

For the \NP-membership, it suffices to observe that the length of a
minimal covering execution is polynomially bounded, thanks to Theorem~\ref{th:Ubounds-reconfig} and~\ref{th:Ubounds-lossy}, respectively.
Moreover, configurations and updates of configurations by given
transitions can be represented in and computed in a compact way. It is
thus possible to implement a guess-and-check non-deterministic
polynomial time algorithm for the \textsc{MinCover} problem, that non
deterministically guesses an execution with $k$ nodes of maximal
length that is polynomially bounded in the size of the broadcast
protocol.

\section{Conclusion}%
\label{sec:conclusion}

In this paper, we have given a tight linear bound on the cutoff
  and a tight quadratic bound on the covering length for
  reconfigurable broadcast networks. We have also proposed a new
  semantics for broadcast networks with a static topology, where
  messages can be lost at sending.  Similar tight bounds can be proven
  for that new semantics.  Proofs are based on a refinement of the
  saturation algorithm of~\cite{DSTZ12}, and on fine analysis of
  copycat lemmas. As a side result of these constructions, we get that
  the set of states which can be covered by the two semantics is
  actually the same, but that the reconfigurable semantics can be
  linearly more succinct (in terms of number of nodes). We also
   prove the \NP-completeness for the
  existence of a witness execution with
  the minimal number of nodes.

  As future work, we propose to investigate the tradeoff between
  number of nodes and length of covering computation. The family of
  broadcast protocols represented in Figure~\ref{fig:tradeoff}
  illustrates this phenomenon.
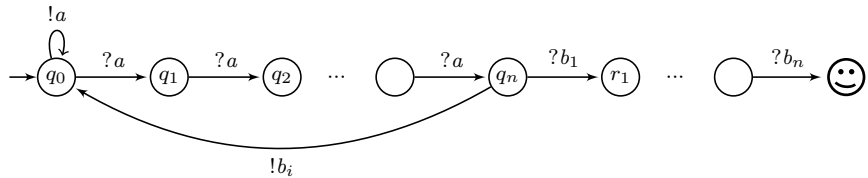
\begin{figure}[ht]
\begin{center}
\begin{tikzpicture}[shorten >=1pt,node distance=7mm and 1.5cm,on grid,auto,semithick]
    \everymath{\scriptstyle}
  \node[state,inner sep=1pt,minimum size=5mm] (q_0) {$q_0$};
  \path (q_0.-180) edge[latex'-] ++(180:4mm);
\node[state,inner sep=1pt,minimum size=5mm] (q_1) [right = of q_0] {$q_1$};
\node[state,inner sep=1pt,minimum size=5mm] (q_2) [right = of q_1] {$q_2$};
\node (dots) [right = of q_2,xshift=-.75cm] {$\cdots$};
\node[state,inner sep=1pt,minimum size=5mm] (q_n-1) [right = of q_2] {};
\node[state,inner sep=1pt,minimum size=5mm] (q_n) [right = of q_n-1] {$q_n$};

 \path[-latex']
(q_0) edge [loop above]	node [above ] {$\broadcast{\amess}$} (q_0)
 (q_0) edge node {$\receive{\amess}$} (q_1)
(q_1) edge node {$\receive{\amess}$} (q_2)
(q_n-1) edge node {$\receive{\amess}$} (q_n)
(q_n) edge [bend left] node {$\broadcast{\bmess}_i$} (q_0)
;
   \node[state,inner sep=1pt,minimum size=5mm] (r_2) [right = of q_n] {$r_1$};
\node (dots) [right = of r_2,xshift=-.75cm] {$\cdots$};
\node[state,inner sep=1pt,minimum size=5mm] (r_n-1) [right = of r_2] {};
\node[inner sep=1pt,minimum size=5mm] (r_n) [right = of r_n-1] {$\Smiley[1.8]$};

   \path[-latex']
   (q_n) edge node {$\receive{\bmess}_1$} (r_2)
   (r_n-1) edge node {$\receive{\bmess}_n$} (r_n)
;
\end{tikzpicture}
\caption{Tradeoff between cutoff and covering length: constant number
  of nodes need quadratic length, whereas linear number of nodes
  only require linear length.}\label{fig:tradeoff}
  \end{center}
\end{figure}
For instance under the static topology semantics, 3 nodes are
necessary and sufficient to cover $\Smiley$, independently of the
value of $n$. However, with $3$ nodes, the length of any covering
execution is quadratic in~$n$: one node performs the broadcasts of all
$b_i$'s and needs to go $n$~times through the sequence of states $q_0,
\cdots, q_n$. In constrast, with a linear number of nodes (precisely
$n{+}2$), there exists a static covering execution of linear
length. One node sends all others to $q_n$ broadcasting $n$~$\mess$'s,
and then $n$ successive broadcasts of $b_1$ to $b_n$ are sufficient to
cover $\Smiley$.  The precise interplay between number of nodes and
length of covering execution is a possible direction for future work.
Another possible research direction is to investigate the notions of
cutoff and covering length in quantative extensions of broadcast
networks, such as probabilistic protocols~\cite{BFS14} or timed
networks~\cite{ADRST11}.

\bibliographystyle{alpha}
\bibliography{reference}
\end{document}